\setlist{leftmargin=*, topsep=0.5em, parsep=0pt, itemsep=1em, labelindent=0pt, align=left}
\theoremstyle{definition}
\newtheorem{theorem}{Theorem}[section]
\newtheorem{definition}{Definition}[section]
\newtheorem{proposition}{Proposition}[section]
\newtheorem{remark}{Remark}[section]
\newtheorem{assumption}{Assumption}
\crefname{assumption}{assumption}{assumptions}
\newtheorem{assumptionm}{Assumption}
\crefname{assumption}{assumption}{assumptions}
\crefname{assumptionstar}{assumption}{assumptions}
\newtheorem{lemma}{Lemma}[section]
\numberwithin{equation}{section}
\newtheoremstyle{blue}%
{6.5pt}
{5pt}
{}
{}
{\bfseries\color{blue}}
{.}
{.4em}
{}
\theoremstyle{blue}
\newcommand{\bigO}[1]{\mathcal{O}(#1)\xspace}
\newcommand{\littleo}[1]{o(#1)\xspace}
\newcommand{\bvec}[1]{\overset{{}_{\shortleftarrow}}{#1}}
\newcommand{\Qro}{\mathbb{Q}}
\newcommand{\Ex}{\mathbb{E}}
\newcommand{\esssup}{\mathrm{ess\;sup}}
\renewcommand{\AA}{\mathcal{A}}
\newcommand{\BB}{\mathcal{B}}
\newcommand{\CC}{\mathcal{C}}
\newcommand{\FF}{\mathcal{F}}
\newcommand{\GG}{\mathcal{G}}
\newcommand{\LL}{\mathcal{L}}
\newcommand{\NN}{\mathcal{N}}
\newcommand{\PP}{\mathcal{P}}
\renewcommand{\SS}{\mathcal{S}}
\newcommand{\WW}{\mathcal{W}}
\newcommand{\XX}{\mathcal{X}}
\newcommand{\YY}{\mathcal{Y}}
\newcommand{\ZZ}{\mathcal{Z}}
\newcommand{\AAscr}{\mathscr{A}}
\newcommand{\BBscr}{\mathscr{B}}
\newcommand{\CCscr}{\mathscr{C}}
\newcommand{\LLscr}{\mathscr{L}}
\newcommand{\dd}{\mathrm{d}}
\newcommand{\bd}{\bvec{\dd}}
\newcommand{\mrB}{\mathring{B}}
\newcommand{\reals}{\mathbb{R}}
\newcommand{\naturals}{\mathbb{N}}
\newcommand{\integers}{\mathbb{Z}}
\newcommand{\vep}{\varepsilon}
\newcommand{\tinT}{t\in[0,T]}
\newcommand{\intr}{\mathfrak{r}}
\newcommand{\gind}[2]{\textbf{1}_{#1}(#2)}
\newcommand{\rind}[1]{\textbf{1}_{#1}}
\numberwithin{table}{section}
\definecolor{red}{HTML}{D62728}
\definecolor{blue}{RGB}{ 0, 109, 219}
\definecolor{dgreen}{rgb}{0,.8,0}
\newcommand{\recbullet}{\textcolor{black}{\raisebox{.45ex}{\rule{.8ex}{.8ex}}}}
    \let\Cref\crtCref
    \let\cref\crtcref
\crefname{enumi}{}{}
\Crefname{enumi}{}{}
\begin{document}
\title[]{On Stochastic Partial Differential Equations and their applications to Derivative Pricing through a conditional Feynman-Kac formula}
\author{Kaustav Das$^{\dagger \ddagger}$}
\author{Ivan Guo$^{\dagger \ddagger}$}
\author{Gr\'egoire Loeper$^{\mathsection}$}
\address{$^\dagger$School of Mathematics, Monash University, Victoria, 3800 Australia.}
\address{$^\ddagger$Centre for Quantitative Finance and Investment Strategies, Monash University, Victoria, 3800 Australia.}
\address{$^\mathsection$BNP Paribas Global Markets, Paris, France.}
\email{kaustav.das@monash.edu, ivan.guo@monash.edu, gregoire.loeper@bnpparibas.com}
\date{}
\maketitle

\begin{abstract}

The price of a financial derivative can be expressed as an iterated conditional expectation, where the inner term conditions on the future of an auxiliary process. We show that this inner conditional expectation solves an SPDE (a `conditional Feynman-Kac formula'). The problem requires conditioning on a backward filtration generated by the noise of the auxiliary process and enlarged by its terminal value, leading us to search for a backward Brownian motion here. This adds a source of irregularity to the SPDE which we tackle with new techniques. Lastly, we establish a new class of mixed Monte-Carlo PDE numerical methods.

\vspace{.5cm}

\hspace{-.42cm}Keywords: Stochastic PDE; Conditional Feynman-Kac Formula; Mixed Monte-Carlo PDE; Backward Stochastic Calculus; Stochastic Volatility.

\end{abstract}

\section{Introduction}
\label{sec:introduction}
\noindent 
The purpose of this article is to demonstrate that certain types of Stochastic Partial Differential Equations (SPDEs) naturally arise in financial derivative pricing. Briefly, let $X$, $V$, and $\intr$ be the asset price process, an auxiliary process (often stochastic variance/volatility), and deterministic interest rate respectively, see \Cref{sec:preliminaries} for their definitions, and \Cref{sec:multivariablesetting} for the general multivariable setting. Let $H$ be a European derivative which pays $\varphi(X_T)$ at time $T$. One can express $H_t$ as an iterated conditional expectation under a chosen risk-neutral measure in the following fashion: 
\begin{align*}
		H_t &= e^{-\int_t^T \intr_s \dd s}\,\Ex \big [u(t, X_t) | X_t, V_t \big ],
\end{align*}
where
\begin{align}
	u(t, x) := \Ex[ \varphi(X_T) | X_t = x, \GG_{t,T}]. \label{eqn:utx}
\end{align}
Here we write $\GG_{t,T}$ as a placeholder which will be given precise meaning later on, but roughly speaking it is a suitable $\sigma$-algebra which essentially corresponds to the future of the auxiliary process $V$ over $[t, T]$. Thus $u(t, x)$ is a random field which is $\GG_{t, T}$ measurable for each fixed $(t, x)$. Denoting by $V_{[t, T]}$ the trajectory of $V$ over $[t, T]$, then at least informally, one can think of $u(t, x)$ as a functional of $V_{[t, T]}$, namely $u(t, x) \equiv u(t, x, V_{[t, T]})$.

In this article we prove that $u(t,x)$ from \cref{eqn:utx} solves a backward linear SPDE, similar to the classical Feynman-Kac formula from the deterministic PDE scenario. Such a relationship is known as a conditional Feynman-Kac formula, and many versions of these formulas have been studied in the literature, albeit in the context of non-linear filtering theory. Recently, these results have been exploited in the context of generative modelling, see \citep{ho2020denoising}. Naturally, the existence and regularity properties of these types of SPDEs that arise through conditional Feynman-Kac formulas are of great importance. We remark that the backward SPDEs considered in this article are understood in the backward It\^o sense, and thus are not related to the theory of backward stochastic differential equations (BSDEs) established by Pardoux and Peng \citep{pardoux1990adapted}, which has become quite prevalent in the current stochastic analysis literature. On this note, a number of recent articles such as \citep{bayer2022pricing,bank2023rough} utilise backward SPDEs to represent prices of financial derivatives. However, the filtration they condition on is forward and the stochastic integration is forward {\textemdash} consequently the types of backward SPDEs they study are of the Pardoux and Peng type. Thus their methodology is completely different to ours.

In the non-linear filtering literature there is a shift in terminology. Namely, one considers a signal process $X$ that is unobserved, and an observation process $V$ which is observed, these processes being obtained through an SDE. The objective is to find an SPDE representation for conditional expectations of the form \cref{eqn:utx}, i.e., a conditional Feynman-Kac formula. However, these formulas depend on the precise formulation of the SDE for $(X, V)$ as well as the explicit definition of the $\sigma$-algebra $\GG_{t, T}$. Additionally, the succinct martingale arguments typically used in modern proofs for classical Feynman-Kac formulas from the deterministic PDE setting cannot be utilised to derive conditional Feynman-Kac formulas, as the collection of $\sigma$-algebras $(\sigma(X_t) \vee \GG_{t,T})_{\tinT}$ are neither increasing nor decreasing, meaning that $t \mapsto u(t, X_t)$ does not form a Doob martingale. Thus, more sophisticated methods must be employed.

We now briefly outline the various results on conditional Feynman-Kac formulas that have arisen in the non-linear filtering literature. \citet[][Theorem 2.1]{pardoux1979stochastic} deduces a conditional Feynman-Kac formula for a system where the noises driving the signal process $X$ and observation process $V$ are correlated, yet the coefficients in the SDE for $X$ do not depend on $V_t$, and where $\GG_{t, T}$ corresponds to the increments of $V$ over $[t, T]$. Due to the backward and forward nature of the problem, typical stochastic analysis theory cannot be utilised, and they resort to a direct time discretisation method in their proof. \citet[][Theorem 4.1]{krylov1982stochastic} deduce a conditional Feynman-Kac formula similar to the one from \citep{pardoux1979stochastic}, albeit with a more elegant proof involving a clever application of the classical Feynman-Kac formula in tandem with orthogonality arguments. \citet[][Theorem 6.5]{pardoux1982equations} extends the aforementioned results, namely a conditional Feynman-Kac formula is established in the case where the coefficients in the SDE for the signal process $X$ can depend on the observation process $V_t$, and moreover $\GG_{t, T}$ now refers to the path of $V$ over $[t, T]$. Unfortunately the elegant methods from \citep{krylov1982stochastic} cannot be applied here; roughly speaking this is because the dependence of the coefficients on $V_t$ precludes their particular use of the classical Feynman-Kac formula alongside orthogonality arguments. Thus the time discretisation method from \citep{pardoux1979stochastic} must be appealed to and modified accordingly. \citet[][Theorem 4.2]{ocone1993stochastic} consider the case for when the coefficients in the SDE of the signal process $X$ depend on the whole trajectory of the observation process $V$, and moreover, the $\sigma$-algebra $\GG_{t, T} \equiv \GG_{0, T}$ refers to the path of $V$ over $[0, T]$. Due to this framework, the anticipating stochastic calculus must be utilised, and thus the conditional Feynman-Kac formula they derive involves Skorokhod integrals. \citet{pardoux1994backward} show that so-called Backward Doubly Stochastic Differential Equations (Backward in the sense of Pardoux and Peng) can be utilised to represent solutions to certain Backward (in the sense of It\^o) semilinear SPDEs. In this situation a conditional Feynman-Kac formula comes as a particular case of their methodology. Furthermore, their methodology generalises previous conditional Feynman-Kac formulas as an additional term ($a$ in \citep[][Remark 3.4]{pardoux1994backward}) allows for some added flexibility. However, their methodology does not allow for correlated Brownian motions, and thus in another way is more restrictive than previously developed conditional Feynman-Kac formulas. 

In this article we prove a version of the conditional Feynman-Kac formula corresponding to the derivative pricing problem outlined in the first paragraph, and moreover we study and prove results on the existence and regularity of the associated SPDE. Our problem differs to the ones considered in the non-linear filtering literature as the $\sigma$-algebra $\GG_{t, T}$ we are required to condition on involves the increments of the noise driving the auxiliary process $V$ over $[t, T]$, as well as a value of $V$ on this interval. Additionally, our coefficients in the SDE for the asset price process $X$ can depend on the auxiliary process $V_t$. The implications of this are that when deriving the associated SPDE for our conditional Feynman-Kac formula, one must search for a new backward Brownian motion in this particular backward filtration $(\GG_{t, T})_{\tinT}$. This in turn adds an additional source of irregularity to the SPDE which we tackle with new techniques. We choose to consider such a setting as the financial applications demand this.

An important application of our conditional Feynman-Kac formula is in the development of a mixed Monte-Carlo PDE method for pricing financial derivatives. Indeed, we see from \cref{eqn:utx} that the time $0$ price of a European derivative is given by
\begin{align*}
	H_0 = e^{-\int_0^T \intr_r \dd r} \Ex [ u(0, x) ].
\end{align*}
Through our conditional Feynman-Kac formula, $u(t, x)$ solves a SPDE. Thus the basic idea for a mixed Monte-Carlo PDE method is to simulate the price $H_0$ by numerically solving the SPDE repeatedly to obtain many i.i.d. copies of $u(0, x)$, and then simply averaging over them. To contrast this approach with other well known methods, we first note that closed-form formulas for $H_0$ are rare. Thus the standard practice in applications is to calculate $H_0$ through numerical methods, either via a Full Monte-Carlo simulation, or numerically solving the associated deterministic PDE obtained through the classical Feynman-Kac formula. However, both these methods come with their disadvantages, especially in a high dimensional setting. Namely, Full Monte-Carlo methods suffer from high variance and large computational costs, whereas numerical PDE methods do not fare well for dimensions greater than 3 or 4. Thus the main advantage of a mixed Monte-Carlo PDE method over a Full Monte-Carlo simulation or numerical PDE methods is that one can enjoy the best of both worlds by choosing the system in such a way that one extracts the benefits of each latter method, and discards their disadvantages. For example, suppose our system has $M$ components. A clever use of a mixed Monte-Carlo PDE method could be to pass on one or two components whose paths are known to be quite volatile onto the PDE solver, and the rest $M-1$ or $M-2$ components onto the Monte-Carlo simulation. As PDE methods fare well in lower dimensions, this is efficient, and moreover we achieve variance reduction as compared to a Full Monte-Carlo method as the volatile paths have been tackled by the PDE solver. In short, mixed Monte-Carlo PDE methods serve to provide variance and dimensionality reduction for the derivative pricing problem. 

Mixed Monte-Carlo PDE methods have recently seen a surge of interest in the literature, and this is mainly due to their benefits in applications being immense. These methods were initiated by \citet{loeper2009mixed} and \citet{lipp2014mixing}, who develop a mixed Monte-Carlo PDE method by showing that one can express the price of a derivative as an expectation of a function that solves a PDE with random coefficients. They coin the term `conditional PDE' to refer to these types of PDEs. This idea is then built upon by \citet{dang2015dimension, dang2017dimension}, who combine Fourier transform methods in order to obtain quasi closed-formed formulas for the solutions to these conditional PDEs in the context of various pricing problems. \citet{cozma2016mixed} prove a number of theoretical results regarding error and computational runtime for these mixed Monte-Carlo PDE methods. Most recently, \citet{farahany2020mixing} consider a mixed Monte-Carlo PDE method for the pricing of Bermudan options, effectively writing the continuation value as an iterated conditional expectation, then deducing that the inner one solves a conditional PDE. However, what all these aforementioned mixed Monte-Carlo PDE methods share in common is that the `PDE' aspect refers to a conditional PDE. Instead, in this article we develop a mixed Monte-Carlo PDE method where the `PDE' aspect now refers to an SPDE, which we believe is the first of its kind. Our mixed Monte-Carlo PDE method thus serves as a link between the field of derivative pricing and SPDEs; we hope that this connection will yield further insights in future research.

Our first main result is \Cref{thm:spdeexistence}, which pertains to the existence and regularity of the SPDE of interest. Our next main result is \Cref{thm:conditionalfeynmanwellposed}, which is a conditional Feynman-Kac formula. Lastly, we showcase the utility of the conditional Feynman-Kac formula by providing a simple demonstration of a mixed Monte-Carlo PDE method for pricing a European option in \Cref{sec:numerics}. The sections are organised as follows:
\begin{itemize}
\item \Cref{sec:preliminaries} contains preliminary content, where we provide the model framework and introduce the SPDE which shall be the focus of this article.
\item In \Cref{sec:mainresults} we provide our main results, namely the existence of a unique solution to the aforementioned SPDE, as well as a conditional Feynman-Kac formula.
\item \Cref{sec:proofs} is devoted to the proofs of our main results from \Cref{sec:mainresults}.
\item \Cref{sec:multivariablesetting} consists of extensions of our main results to the multivariable setting.
\item In \Cref{sec:numerics} we explore a numerical example for pricing a European option by mixing numerical PDE and Monte-Carlo methods via our conditional Feynman-Kac formula.
\end{itemize}
\Cref{appen:backwarddefns} contains some content regarding backward stochastic calculus which we will extensively utilise. We remark that the backward stochastic calculus theory we consider when studying backward SPDEs in this article should not be confused with the theory of BSDEs initiated by Pardoux and Peng.

\subsection{Informal derivation of conditional Feynman-Kac formula}

As motivation for the rest of the article, we will now provide an informal argument which elucidates how the SPDE in the conditional Feynman-Kac formula arises, and which moreover, highlights some of the main ideas in the (rather technical) proof of it (\Cref{prop:conditionalfeynmanwellposedprop} and \Cref{thm:conditionalfeynmanwellposed}). Definitions of terminology, objects and notation in the following can be found in \Cref{sec:preliminaries}. Leading on from the first paragraph, recall $H$ is the price of a European derivative which pays $\varphi(X_T)$ at time $T$. Consider the following backward SPDE
\begin{align}
\begin{split}
	-\dd u (t, x) &= \left (\LL^x_t - \CC^x_t \right ) u(t,x)\dd t + \BB^x_t u(t,x) \bd B_t, \\
	 u(T,x) &= \varphi(x), \label{eqn:spdeintro}
\end{split}
\end{align}
where we have the following family of (stochastic) differential operators indexed by $t \in [0, T]$,
\begin{align}
\begin{split}
	\LL^x_t &:= \frac{1}{2} \sigma^2(t,x,V_t) \partial_x^2 + \mu (t, x, V_t) \partial_x, \\
	\BB^x_t &:=  \rho_t \sigma(t,x,V_t) \partial_x, \\
	\CC^x_t &:= \rho_t \beta(t,V_t) \sigma_y(t, x, V_t) \partial_x.
	\label{eqn:operatorsintro}
\end{split}
\end{align}
The coefficients in the operators \cref{eqn:operatorsintro} stem from the system \crefrange{eqn:system2X}{eqn:system2V}. Moreover, the term $\bd B_t$ indicates backward stochastic integration which is defined in \Cref{defn:backstoch}. The goal is to show that the following object
\begin{align*}
	u(t, x) = \Ex[ \varphi(X_T) | X_t = x, \GG_{t,T}],
\end{align*}
solves the SPDE \cref{eqn:spdeintro}, where $\GG_{t, T}$ is a $\sigma$-algebra roughly corresponding to the future of the process $V$. Suppose $u(t, x)$ is the unique solution to the SPDE \cref{eqn:spdeintro}, backward adapted to $(\GG_{t, T})_{\tinT}$. The first thing to note is that it does not make sense to consider the stochastic differential of the mapping $t \mapsto u(t, X_t)$. The reason being is that $X$ corresponds to the solution of a forward SDE, however $(\GG_{t,T})_{\tinT}$ is a backward filtration. Hence if a stochastic differential existed, it would require movements both forward and backward in time, which is not possible within the theory of It\^o. However, it is perfectly legitimate to consider an increment of $t \mapsto u(t, X_t)$ over a finite partition $\{t = t_0 < t_1 < \cdots < t_{n-1} < t_n = T\}$ of $[t, T]$. Write $\Ex_{t, x}^{t, T} \equiv \Ex[\cdot | X_t = x, \GG_{t, T}]$. Furthermore, we note that 
\begin{align}
	\Ex^{t, T}_{t,x} \left [ \sum_{i = 0}^{n-1} u(t_{i+1}, X_{t_{i+1}}) - u(t_i, X_{t_i}) \right ] = \Ex \left [ \varphi(X_T) | X_t = x, \GG_{t, T} \right ] - u(t, x). \label{eqn:telescoping1}
\end{align}
Hence once we show that the LHS of the preceding expression tends to $0$ in $L^1(\Qro_{t, x})$ as $n \to \infty$, then we are done, since the RHS does not depend on $n$. Ergo, it is imperative that we study the increment of $t \mapsto u(t, X_t)$. We do so by utilising the following decomposition:
\begin{align*}
	u(t_{i+1}, X_{t_{i+1}}) - u(t_i, X_{t_i}) &= \left [ u(t_{i+1}, X_{t_{i+1}}) - u(t_{i+1}, X_{t_i}) \right ] + \left [u(t_{i+1}, X_{t_i}) - u(t_i, X_{t_i}) \right ] \\
&= \chi_i + \tau_i,
\end{align*}
where 
\begin{equation*}
\begin{aligned}
	\chi_i &:= u(t_{i+1}, X_{t_{i+1}}) - u(t_{i+1}, X_{t_i}), \qquad & \tau_i &:=u(t_{i+1}, X_{t_i}) - u(t_i, X_{t_i}).
\end{aligned}
\end{equation*}
Notice that for $\chi_i$ space is moving and time is fixed, whereas for $\tau_i$ space is fixed and time is moving. 
We can rewrite $\chi_i$ using It\^o's formula: 
\begin{align*}
	\chi_i &=  u(t_{i+1}, X_{t_{i+1}}) - u(t_{i+1}, X_{t_i}) = \int_{t_i}^{t_{i+1}} u_x(t_{i+1}, X_r) \dd X_r + \frac{1}{2} \int_{t_i}^{t_{i+1}} u_{xx}(t_{i+1}, X_r) \dd \langle X, X \rangle_r \\
		&= \int_{t_i}^{t_{i+1}} \left (u_x(t_{i+1}, X_r) \mu(r, X_r, V_r) + \frac{1}{2} u_{xx}(t_{i+1}, X_r) \sigma^2(r, X_r, V_r) \right ) \dd r \\
		&\quad+ \int_{t_i}^{t_{i+1}} u_x(t_{i+1}, X_r) \rho_r \sigma(r, X_r, V_r) \dd B_r + \int_{t_i}^{t_{i+1}} u_x(t_{i+1}, X_r) \varrho_r \sigma(r, X_r, V_r) \dd \hat B_r.
\end{align*}
At this point we note the following two facts. First, the $\dd \hat B$ integral in the preceding expression will not contribute after taking $\Ex_{t,x}^{t, T}$ due to independence of $B$ and $\hat B$. Second, we require the $\dd B$ stochastic integral to be a backward one, due to the measurability properties of the solution $u(t, x)$. Hence we now consider `reversing' the $\dd B$ integral as follows:
\begin{align}
	\int_{t_i}^{t_{i+1}} u_x(t_{i+1}, X_r) \rho_r \sigma(r, X_r, V_r) \dd B_r &= \int_{t_i}^{t_{i+1}} u_x(t_{i+1}, X_r) \rho_r \sigma(r, X_r, V_r) \bd B_r  \nonumber \\
	&\quad- \int_{t_i}^{t_{i+1}} \dd \langle u_x(t_{i+1}, X_\cdot) \rho_\cdot \sigma(\cdot , X_\cdot, V_\cdot) , B_\cdot \rangle_r. \label{eqn:quadvariation}
\end{align}
Now noting that we will take $\Ex_{t,x}^{t, T}$ in the end, and using It\^o's formula to deduce the representation
\begin{align}
\label{eqn:urep}
\begin{split}
	u_x(t_{i+1}, X_r) \sigma(r, X_r, V_r) &= u_x(t_{i+1}, X_t) \sigma(r, X_t, V_r) + \int_t^r \partial_x (u_x(t_{i+1}, X_\theta) \sigma(r, X_\theta, V_r)) \dd X_\theta \\&
	\quad + \frac{1}{2} \int_t^r \partial_{xx} (u_x(t_{i+1}, X_\theta) \sigma(r, X_\theta, V_r) ) \dd \langle X, X \rangle_\theta
\end{split}
\end{align} 
we can compute the quadratic covariation term \cref{eqn:quadvariation} further:
\begin{align}
	\Ex_{t, x}^{t, T} \int_{t_i}^{t_{i+1}} \dd \langle u_x(t_{i+1}, X_\cdot) \rho_\cdot \sigma(\cdot , X_\cdot , V_\cdot) , B_\cdot \rangle_r &=\int_{t_i}^{t_{i+1}}  \Ex_{t, x}^{t, T} u_x(t_{i+1}, x) \rho_r \dd \langle \sigma(\cdot, x, V_\cdot), B_\cdot \rangle_r \label{eqn:quadvariationcomp1} \\
	&= \int_{t_i}^{t_{i+1}}  \Ex_{t, x}^{t, T} u_x(t_{i+1}, x) \rho_r \sigma_y (r, x, V_r) \beta(r, V_r) \dd r \label{eqn:quadvariationcomp2} \\
	&= \Ex_{t, x}^{t, T} \int_{t_i}^{t_{i+1}} u_x(t_{i+1}, x) \rho_r \sigma_y (r, x, V_r) \beta(r, V_r) \dd r \nonumber
\end{align}
where all the preceding equalities are understood up to some higher-order negligible terms (namely, $\littleo{\Delta t}$). Moreover, \cref{eqn:quadvariationcomp1} is true by substitution of \cref{eqn:urep}, and \cref{eqn:quadvariationcomp2} is obtained through additional use of It\^o's formula on $r \mapsto \sigma(r, x, V_r)$. Thus we obtain 
\begin{align*}
	\Ex_{t, x}^{t, T} \left [\chi_i \right ] &= \Ex_{t, x}^{t, T} \int_{t_i}^{t_{i+1}} \left (\LL_r^x - \CC_r^x \right )u(t_{i+1}, x) \dd r + \Ex_{t, x}^{t, T} \int_{t_i}^{t_{i+1}} \BB_r^x u(t_{i+1}, x) \bd B_r.
\end{align*}
The term $\tau_i$ is easy to handle, we simply use the SPDE \cref{eqn:spdeintro}, as $\tau_i = u(t_{i+1}, X_{t_i}) - u(t_i, X_{t_i})$, yielding
\begin{align*}
	 \Ex_{t, x}^{t, T} [\tau_i] &= - \Ex_{t, x}^{t, T} \int_{t_i}^{t_{i+1}}\left (\LL^{X_{t_i}}_r - \CC^{X_{t_i}}_r \right ) u(r,X_{t_i})\dd r - \Ex_{t, x}^{t, T}\int_{t_i}^{t_{i+1}}\BB^{X_{t_i}}_r u(r,X_{t_i}) \bd B_r.
\end{align*}
Reformulating \cref{eqn:telescoping1}, we deduce that our goal is to show
\begin{align*}
	\Ex^{t, T}_{t,x} \left [ \sum_{i = 0}^{n-1} \chi_i + \tau_i \right ] \longrightarrow 0
\end{align*}
in $L^1(\Qro_{t, x})$ as $n \to \infty$. Hence, we recognise that the choice of SPDE \cref{eqn:spdeintro} is correct (although, see \Cref{remark:informalSPDE} below). Essentially, the SPDE \cref{eqn:spdeintro} is chosen so as to ensure that the terms $\tau_i$ and $\chi_i$ are more or less the same but with opposite sign.
\begin{remark}
\label{remark:informalSPDE}
We stress that the above derivation is informal. There are a number of technicalities that are not addressed, most importantly, the above SPDE \cref{eqn:spdeintro} is not entirely correct as it is missing a correction term in the drift; this is due to the fact that the backward stochastic integral that appears in it is not well-defined in the It\^o sense. Ergo, the intention of this article is to address and formalise the above argument. Despite this, it should be remarked that the desired SPDE for numerical applications is in fact the one just derived. Roughly speaking, this is due to matters of existence of stochastic integrals not being important when time is discretised, and thus the previously mentioned correction term in the drift formally cancels out with a term in the driving noise. Indeed, \cref{eqn:spdeintro} is the one we use in order to numerically price a European put option using our mixed Monte-Carlo PDE method in \Cref{sec:numerics}.
\end{remark}


\section{Preliminaries}
\label{sec:preliminaries}
\noindent

\noindent We will utilise the following notation and terminology throughout this article. 
For functions $f, g$ with the same domain and codomain, we will often suppress the argument of all functions except the last when writing products. For example, $fg(x, y) \equiv f(x, y) g(x, y)$. Sometimes subscripts will denote a partial derivative of a function, for example, $f_x(x, y)  \equiv \partial_x f(x, y)$.
Let $\zeta$ be an arbitrary stochastic process. The following are different notations for the same object: $\Ex[f(\zeta_T) | \zeta_t =x] \equiv \Ex_{t,x}[f(\zeta_T)].$
Specifically, this means that the expectation is taken w.r.t. $\Qro_{t,x}(\cdot) := \Qro( \cdot | \zeta_t = x)$. We will denote by $\Delta \zeta_i := \zeta_{t_{i+1}} - \zeta_{t_i}$ the forward difference of $\zeta$ over some partition of $[0,T]$.

In the rest of the article we assume that all filtrations satisfy the usual conditions. For a forward filtration, this means it is right continuous and the initial element has been augmented by null sets, whereas in the case of a backward filtration, this means that it is left continuous and the terminal element has been augmented by null sets. The following notation will be used for a variety of specific $\sigma$-algebras.

\begin{enumerate}[label = (\roman*), ref = \roman*]

\item $\FF_{s,t}^\zeta := \sigma(\zeta_v - \zeta_u, s \leq u < v \leq t)$ denotes the $\sigma$-algebra generated by the \emph{increments} of $\zeta$ over the interval $[s,t]$.

\item $\bar \FF_{s,t}^\zeta := \sigma(\zeta_u, s \leq u \leq t)$ denotes the $\sigma$-algebra generated by the \emph{path} of $\zeta$ over the interval $[s,t]$. It is then clear that $\bar \FF_{s,t}^\zeta = \FF_{s,t}^\zeta \vee \sigma(\zeta_{t'})$, where $t' \in [s, t]$, i.e., the path over $[s,t]$ is equal to the increments over $[s,t]$ `plus' a point of $\zeta$ on $[s,t]$.

\item Given $\zeta_0$ is constant, we will write $\FF_t^\zeta \equiv \bar \FF^\zeta_{0,t} = \FF_{0,t}^{\zeta}$, which is the $\sigma$-algebra corresponding to the natural filtration of $\zeta$.

\end{enumerate}
We stress that there is a subtle distinction between the increments $\sigma$-algebra $\FF^\zeta_{s,t}$ and path $\sigma$-algebra $\bar \FF^\zeta_{s,t}$. The following remark is a simple example which illustrates this.
\begin{remark}
Let $Z$ be a standard Brownian motion w.r.t. its natural filtration $(\FF_t^Z)_{\tinT}$. Define $\tilde Z_t = Z_t - Z_T$. Then $\tilde Z$ is a backward Brownian motion in $(\FF_{t,T}^Z)_{\tinT}$. However, it is not a backward Brownian motion in $(\bar \FF_{t,T}^Z)_{\tinT}$. It is easy to see this as 
\begin{align*}
	\Ex[\tilde Z_0 | \bar \FF_{t, T}^Z ] = \Ex[Z_0 - Z_T | \FF_{t, T}^Z, Z_T ] = - Z_T = \tilde Z_0 \neq \tilde Z_t. 
\end{align*} 
Hence, $\tilde Z$ is not a backward martingale in $(\bar \FF_{t,T}^Z)_{\tinT}$, and thus not a backward Brownian motion.\footnote{In fact, we have that 
\begin{align*}
	\Ex[\tilde Z_s | \bar \FF_{t, T}^Z ] = \tilde Z_t - \Ex \left [ \int_s^t \frac{Z_r}{r} \dd r | \bar \FF_{t,T}^Z \right ]. 
\end{align*} 
This can be seen by adapting the classical Brownian bridge example from initial enlargement of filtration theory. Namely, $\tilde Z$ remains a semimartingale in the backward filtration $(\bar \FF_{t, T}^Z)_{\tinT}$ and moreover possesses the decomposition $\tilde Z_t = \bar Z_t - \int_t^T \frac{Z_r}{r} \dd r$, with $\bar Z$ being a backward Brownian motion in $(\bar \FF_{t, T}^Z)_{\tinT}$. See \citep[][Chapter 5.9]{jeanblanc2009mathematical} for further information.
}
\end{remark}

Let $(S, \SS)$ be a measurable space, where $S$ is a real, separable Hilbert space with inner product $\langle \cdot, \cdot \rangle_S$ and induced norm $\|\cdot \|_S := \sqrt{\langle \cdot, \cdot \rangle_S}$. 
In the following, $U$ denotes an open subset of $\reals^n$. The space $C(U ; S)$ consists of functions $\psi: U \to S$ which are continuous. The space $C^k(U ; S)$ consists of $k$-times (strongly) differentiable functions $\psi: U \to S$, whose $k$-th derivative is continuous. Spaces $C_c(\dots)$ and $C_c^k(\cdots)$ will denote the subspace of $C(\cdots)$ and $C^k(\cdots)$ containing functions with compact support respectively, whereas $C_b(\cdots)$ and $C_b^k(\cdots)$ will denote the subspace of $C(\cdots)$ and $C^k(\cdots)$ containing functions which have bounded partial derivatives up to order $k$ respectively. We will write $\mathbb{B}(X, Y)$ to denote the space of bounded linear operators from $X$ to $Y$. Let $(X, \XX, \mu)$ be a measure space. Integration of measurable functions $\psi: (X, \XX ) \to (S, \SS)$ w.r.t. $\mu$ is understood in the Bochner sense. Consider the norm
\begin{align*}
 	\|\psi \|_{L^p((X, \XX, \mu);S)} := \begin{cases}
 	 \left (\int_X \| \psi (x) \|_S^p \mu(\dd x)\right )^{1/p}, &1 \leq p < \infty,  \\
 	 \esssup_{x \in X} \| \psi(x) \|_S, &p = \infty.
 	 \end{cases}
\end{align*}
Then
\begin{align*}
 	L^p((X, \XX, \mu) ;S) := \{ \psi : \|\psi \|_{L^p((X, \XX, \mu); S)} < \infty \}
\end{align*}
is a Banach space for $1 \leq p \leq \infty$, where functions in this space are identified $\mu$ a.e. Moreover, $L^2((X, \XX, \mu) ; S)$ is a Hilbert space with inner product $\langle \psi_1, \psi_2 \rangle_{L^2((X, \XX, \mu) ; S)} :=  \int_X \langle \psi_1(x) , \psi_2(x) \rangle_{S} \mu(\dd x)$. Often when writing $L^p$ spaces, only some of the arguments of the corresponding measure space will be significant, and thus we may omit some arguments for notational convenience. For example, the space $L^p((X, \XX, \mu); \SS)$ could be written as $L^p(\mu ; \SS)$, or $L^p(X)$. This notation will carry over to the inner products and norms.

Let $k \in \naturals$ and $1 \leq p \leq \infty$. We denote by $W^{k, p}(U)$ the Sobolev space given by
\begin{align*}
	W^{k, p}(U) := \{ \psi : U \to \reals \mid  \partial^{\alpha} \psi \in L^p(U ; \reals), \text{ for all }0 \leq |\alpha| \leq k \},
\end{align*}
where we utilise the multi-index notation $\partial^{\alpha} \psi := \frac{\partial^{|\alpha|} \psi}{\partial x_1^{\alpha_1} \cdots \partial x_n^{\alpha_n}}$, with $\alpha \in \naturals_0^n$ and $|\alpha| := \alpha_1 + \dots + \alpha_n$. Moreover, $W^{k, p}(U)$ is a Banach space with norm 
\begin{align*}
	\| \psi \|_{W^{k,p}(U) } := \begin{cases}
		 \left (\sum_{|\alpha| \leq k} \int_{U} | \partial^\alpha \psi(x) |^p \dd x \right )^{1/p}, &1 \leq p < \infty, \\
		 \sum_{|\alpha| \leq k } \esssup_{x \in U} |\partial^{\alpha} \psi(x)|, &p = \infty.
		 \end{cases}
\end{align*}
We will write $H^{k}(U) := W^{k, 2}(U)$, which is a Hilbert space with inner product 
\begin{align*}
	\langle \psi_1, \psi_2 \rangle_{H^k(U)} := \sum_{|\alpha| \leq k} \int_{U} \partial^\alpha \psi_1(x) \partial^\alpha \psi_2(x) \dd x.
\end{align*}
We will make use of the following common abuse of notation. When $U$ is an open interval, e.g., $(a,b)$ we will write $C (a,b ; S) \equiv C((a,b) ; S)$, $L^p(a,b ; S) \equiv L^p((a,b) ; S)$, and so forth. We will often omit the codomain when it is clear, e.g., $C^k(\reals^n) \equiv C^k(\reals^n ; \reals)$, $L^p(\reals^n) \equiv L^p(\reals^n ; \reals)$, and so forth.

\subsection{Model framework}
Fix a finite time horizon $T > 0$. Let $W$ and $B$ be one-dimensional Brownian motions on a complete probability space $(\Omega, \FF, \Qro)$, with deterministic time-dependent instantaneous correlation $(\rho_t)_{\tinT}$. In the following, we consider the diffusion process $(X, V)$ taking values in $\reals^2$ and given by the (forward) system
\begin{align}
	\dd X_t &= \mu(t, X_t, V_t) \dd t + \sigma(t, X_t, V_t) \dd W_t, \label{eqn:system1X}\\
	\dd V_t &= \alpha(t, V_t) \dd t + \beta(t, V_t) \dd B_t, \label{eqn:system1V} \\  
	\dd \langle W, B \rangle_t &= \rho_t \dd t. \nonumber
\end{align}
Here $\mu, \sigma: [0,T] \times \reals \times \reals \to \reals$ and $\alpha, \beta : [0,T] \times \reals \to \reals$ are Borel measurable and deterministic. The system \crefrange{eqn:system1X}{eqn:system1V} can be rewritten as
\begin{align}
	\dd X_t &= \mu(t, X_t, V_t) \dd t + \rho_t \sigma(t, X_t, V_t) \dd B_t + \varrho_t \sigma(t, X_t, V_t) \dd \hat B_t, \label{eqn:system2X} \\
	\dd V_t &= \alpha(t, V_t) \dd t + \beta(t, V_t) \dd B_t \label{eqn:system2V}
\end{align}
where $\hat B$ is a one-dimensional Brownian motion independent of $B$, and $\varrho_t := \sqrt{1 - \rho_t^2}$. Here $w := (B, \hat B)$ is a standard two-dimensional Brownian motion, and we denote its natural filtration by $(\FF_t^w)_{\tinT}$, which satisfies the usual conditions.

\begin{remark}
To simplify ideas and reduce notation, we will be content with remaining in the two-dimensional setting. Later on in \Cref{sec:multivariablesetting} we will tackle the general multivariable setting.
\end{remark}

We will enforce the following standard assumption throughout the rest of this article. Its purpose is to guarantee the existence of a pathwise unique strong solution for the system \crefrange{eqn:system2X}{eqn:system2V} which does not blow up in finite time. It is a mixture of the usual It\^o style existence and uniqueness criteria for SDEs, as well as the Yamada-Watanabe condition (see \citep[][Theorem 1]{yamada1971uniqueness}), the latter of which can only be applied to $V$ as it is decoupled from $X$. 

\begin{assumption}
\label{ass:sde}
\noindent
\begin{enumerate}[label = (A\arabic*), ref = A\arabic*]
\item $(x, y) \mapsto \mu(t, x, y)$ and $(x, y) \mapsto \sigma(t, x,  y)$ are locally Lipschitz continuous, uniformly in $t$. 
\item $|\mu(t, x, y)| + |\sigma(t, x, y)| \leq C(1 + |(x, y)|)$, uniformly in $t$.
\item There exists a weak solution  $V$ to \cref{eqn:system2V}. Moreover, there exists non-decreasing functions $\kappa, \gamma: (0, \infty) \to (0, \infty)$ where in addition, $\kappa$ is concave with $\lim_{\vep \downarrow 0} \int_\vep^1 1/\kappa(u) \dd u = \lim_{\vep \downarrow 0} \int_\vep^1 1/\gamma^2(u) \dd u = +\infty$ such that for all $y, y'$ we have $|\alpha(t, y) - \alpha(t, y') | \leq \kappa(y - y')$ and $|\beta(t, y) - \beta(t, y') | \leq \gamma(y - y')$, uniformly in $t$. \label{item:yamada}
\item $|\alpha(t, y)| + |\beta(t, y)| \leq C(1 + |y|)$, uniformly in $t$.
\end{enumerate}
\end{assumption}
In the rest of the article, we will encounter a so-called backward stochastic integral, which shall be understood in the sense of It\^o. Intuitively, a backward stochastic integral ought to possess the following traits. First, its integrand is adapted to a backward filtration generated by the integrator. Indeed, inverting the flow of time should result in the time flow of information being inverted; i.e., our filtration should evolve backwards in time. Secondly, the construction of the integral is done backward, hence, the Riemann sums utilise backward differencing. In other words, this means that the right end point of the integrand is chosen in the Riemann sums. This motivates the following definition.

\begin{definition}[Backward stochastic integral]
\label{defn:backstoch}
Let $Z$ be a backward Brownian motion in a backward filtration $(\GG_{t,T})_{\tinT}$. Let $\zeta$ be adapted to $(\GG_{t,T})_{\tinT}$. The backward stochastic integral of $\zeta$ against $Z$ is defined as
\begin{align*}
	\int_t^T \zeta_r \bd Z_r := \lim_{\delta_n \downarrow 0} \sum_{i=0}^{n-1} \zeta_{t^{(n)}_{i+1}} (Z_{t^{(n)}_{i+1}} - Z_{t^{(n)}_i})
\end{align*}
where $\delta_n := \sup_i (t^{(n)}_{i+1} - t_i^{(n)})$ corresponds to the mesh of the $n$-th partition $\{t = t^{(n)}_0 < \dots < t^{(n)}_{n-1} < t^{(n)}_n = T\}$, and the limit is in probability.
\end{definition}

The existence of the backward stochastic integral can be proved by simply proceeding with the usual construction of the (forward) It\^o integral.

\begin{remark}
\label{remark:Btildeint}
Let $\tilde B_t := B_t - B_T$, where $B$ refers to the forward Brownian motion driving $V$ from \cref{eqn:system2V}. Then $\tilde B$ generates the backward filtration $(\FF_{t,T}^B)_{\tinT}$, i.e., the backward filtration generated by the increments of $B$ on $[t,T]$. Moreover, $\tilde B$ is a standard backward Brownian motion w.r.t. $(\FF_{t,T}^B)_{\tinT}$. Let $\zeta$ be adapted to $(\FF_{t,T}^B)_{\tinT}$. Then we will use the following abuse of notation:
\begin{align*}
	\int_t^T \zeta_r \bd B_r := \int_t^T \zeta_r \bd \tilde B_r
\end{align*}
where the RHS exists as a backward stochastic integral in the sense of \Cref{defn:backstoch}. Note that this is an abuse of notation since $\tilde B$ is a standard backward Brownian motion relative to $(\FF^B_{t,T})_{\tinT}$, not $B$. 
\end{remark}

Define $\bar \FF_{t,T}^{V,B} := \FF_{t,T}^B \vee \sigma(V_t)$, the $\sigma$-algebra generated by the increments of $B$ on $[t,T]$ and the random variable $V_t$, these processes being defined in \crefrange{eqn:system2X}{eqn:system2V}. Note that also, $\bar \FF_{t,T}^{V,B} =  \FF_{t,T}^B \vee \sigma(V_T)$.

\begin{remark}
Let $\eta$ be adapted to $(\bar \FF_{t,T}^B)_{\tinT}$ and $\xi$ be adapted to $(\bar \FF_{t,T}^{V, B})_{\tinT}$. From \Cref{remark:Btildeint}, $\tilde B_t := B_t - B_T$ is a standard backward Brownian motion relative to $(\FF_{t,T}^B)_{\tinT}$. Then the backward stochastic integrals 
\begin{align*}
	\int_t^T \eta_r \bd \tilde B_r    \quad \text{ and } \quad 	\int_t^T \xi_r \bd \tilde B_r
\end{align*}
do not exist in the sense of It\^o, i.e., in the sense of \Cref{defn:backstoch}. This can be seen by noting that the It\^o isometry fails when attempting their construction in the corresponding backward filtrations.
\end{remark}

Suppose that $V_t$ possesses a density $p(t,y)$ w.r.t. Lebesgue measure. That is, $\Qro(V_t \in A) = \int_A p(t,y) \dd y$ for any Borel set $A$ in $\reals$. Define the process 
\begin{align}
	\mrB_t := B_t - B_T - \int_t^T\frac{\partial_y (p(r,V_r) \beta(r,V_r))}{p(r,V_r)} \dd r \label{eqn:mrB}
\end{align} 
where the integrand is taken to be zero if ever $p$ is zero. To ensure $\mrB$ is well-defined, we will require the following assumption, which we will enforce from here on in:

\begin{assumption}
\label{ass:mrB}
\noindent
\begin{enumerate}[label = (B\arabic*), ref = B\arabic*]
 \item The density of $V_0$,  $p_0(y) \equiv p(0,y)$ satisfies $\int_{\reals} \frac{p^2_0(y)}{1 + |y|^k} \dd y < \infty$ for some $k \in \naturals$.
 \item  $(\partial^2_y \beta^2) \in L^{\infty}([0, T] \times \reals ; \reals)$.
\end{enumerate}
\end{assumption}
Hence by \Cref{thm:definingmrB} with $D = 1$, $\mrB$ is a backward Brownian motion in $(\bar \FF_{t,T}^{V,B})_{\tinT}$. 

The following remark quantifies how utilising $\mrB$ vs $\tilde B$ as the stochastic integrator affects calculations.

\begin{remark}
\label{remark:mrBvsB}
Let $\xi$ be adapted to $(\bar \FF_{t,T}^{V, B})_{\tinT}$. Then the backward stochastic integral 
\begin{align*}
	\int_t^T \xi_r \bd \mrB_r
\end{align*}
exists in the sense of \Cref{defn:backstoch}. However, supposing $\xi$ is simple on some partition $\{t = t_0 < \dots < t_{n-1} < t_n = T\}$, we have
\begin{align*}
	\int_t^T \xi_r \bd \mrB_r = \sum_{i=0}^{n-1} \xi_{t_{i+1}} \Delta \mrB_i \neq \sum_{i=0}^{n-1} \xi_{t_{i+1}} \Delta B_i.
\end{align*}
Thus, if for argument's sake we supposed $\int_t^T \xi_r \bd \tilde B_r$ existed, then $\int_t^T \xi_r \bd \mrB_r$ would not coincide with it. In fact, we have
\begin{align*}
	\Delta \mrB_i = \Delta B_i + \int_{t_i}^{t_{i+1}} \frac{\partial_y (p(r,V_r) \beta(r,V_r))}{p(r,V_r)} \dd r.
\end{align*}
Hence despite it being It\^o sense ill-posed, we can informally write an expression for $\int_t^T \xi_r \bd \tilde B_r$, namely
\begin{align*}
	\int_t^T \xi_r \bd \tilde B_r \stackrel{\text{informal}}{=} \int_t^T \xi_r \bd \mrB_r - \int_t^T   \xi_r \frac{\partial_y (p(r,V_r) \beta(r,V_r))}{p(r,V_r)} \dd r.
\end{align*}
\end{remark}

\subsection{The SPDE}

The main focus of this article will be the following (backward) SPDE:
\begin{align}
\begin{split}
	-\dd u (t, x) &= \left (\LL^x_t - \CC^x_t - \frac{\partial_y (p(t,V_t) \beta(t,V_t))}{p(t,V_t)} \BB_t^x \right ) u(t,x)\dd t + \BB^x_t u(t,x) \bd \mrB_t, \label{eqn:spdewellposed} \\
	 u(T,x) &= \varphi(x),
\end{split}
\end{align}
where we have the following family of (stochastic) differential operators indexed by $t \in [0, T]$,
\begin{align}
	\LL^x_t &:= \frac{1}{2} \sigma^2(t,x,V_t) \partial_x^2 + \mu (t, x, V_t) \partial_x, \label{eqn:stochdiffoperatorL}\\
	\BB^x_t &:=  \rho_t \sigma(t,x,V_t) \partial_x, \label{eqn:stochdiffoperatorB} \\
	\CC^x_t &:= \rho_t \beta(t,V_t) \sigma_y(t, x, V_t) \partial_x. \label{eqn:stochdiffoperatorC}
\end{align}

From the perspective of mathematical finance, the purpose of studying the SPDE \cref{eqn:spdewellposed} is the following. Suppose that $(\intr_t)_{t \in [0,T]}$ is the deterministic interest rate, and assume that $\Qro$ is a chosen risk-neutral measure. Let $H$ be the price of a European style derivative on $X$, meaning its payoff $\varphi$ only depends on the terminal value of $X$. Specifically
\begin{align*}
	H_t &=e^{-\int_t^T \intr_r \dd r} \, \Ex\big [\varphi(X_T) | \FF^w_t \big ].
\end{align*}
Recall $\bar \FF_{t, T}^{V, B} = \FF_{t, T}^B \vee \sigma(V_t)$. Let\footnote{At this point one will note that the $\sigma$-algebra $\GG_{t,T}$ from \Cref{sec:introduction} is $\bar \FF_{t,T}^{V,B}$.}
\begin{align}
	\bar u(t, x) := \Ex[ \varphi(X_T) |X_t = x, \bar \FF^{V,B}_{t,T}]. \label{vexpression}
\end{align}
Then
\begin{align*}
	H_t &\stackrel{\text{Markov}}{=}  e^{-\int_t^T \intr_r \dd r} \, \Ex \big [\varphi(X_T) | X_t, V_t \big ] = e^{-\int_t^T \intr_r \dd r} \, \Ex \big [\Ex [ \varphi(X_T) | X_t, \bar \FF_{t,T}^{V,B} ]| X_t, V_t \big ] \\ 
	& \quad= e^{-\int_t^T \intr_r \dd r} \, \Ex \big [\bar u(t, X_t)| X_t, V_t \big ].
\end{align*}
In particular, 
\begin{align*}
	H_0 = e^{-\int_0^T \intr_r \dd r} \, \Ex \big [\Ex[\varphi(X_T) | \bar \FF_{0,T}^{V,B} ] \big ] = e^{-\int_0^T \intr_r \dd r} \, \Ex \big [ \bar u(0, x) \big ].
\end{align*}
We prove that $\bar u(t,x)$ solves the SPDE \cref{eqn:spdewellposed} in \Cref{thm:conditionalfeynmanwellposed}, thereby establishing a connection between derivative pricing and SPDE theory. This result can be utilised for the pricing of American style derivatives through Least Square Monte-Carlo methods by applying it to the continuation value, as well as in other areas of mathematical finance. These applications will be studied in forthcoming articles. The focus of this article however, will be on developing a rigorous foundation for the theory.

\begin{remark}[Variational formulation]
\label{remark:variational}
A solution to the SPDE \cref{eqn:spdewellposed} is to be understood through its variational formulation.\footnote{Precisely, weak in the PDE sense, and strong in the stochastic analysis sense.} To do so we first multiply $\LL_t^x u$ by a test function $v \in H^1(\reals)$ and integrate, thereby obtaining the following expression via integration by parts:
\begin{align*}
	\int_\reals (\LL_t^x u) v(x) \dd x = - \frac{1}{2} \int_\reals \sigma^2(t, x, V_t) u_x v_x(x) \dd x + \int_\reals \left (\mu(t, x, V_t) - \frac{1}{2} \partial_x(\sigma^2(t, x, V_t)) \right ) u_x v(x) \dd x.
\end{align*}
Thus as is standard, $\LL_t^x$ implicitly defines a bilinear form on $H^1(\reals) \times H^1(\reals)$ for almost all $\omega \in \Omega$. Hence, for almost all $\omega \in \Omega$, it makes sense to think of $\LL_t$ as a family of bounded linear operators $(\LL_t)_{t \in [0, T]}$ with $\LL: [0, T] \to \mathbb{B}(H^1(\reals), H^{-1}(\reals))$, so that the natural pairing is given by
\begin{align*}
	\langle \LL_t u, v \rangle =  - \frac{1}{2} \int_\reals \sigma^2(t, x, V_t) u_x v_x(x)  \dd x + \int_\reals \left (\mu(t, x, V_t) - \frac{1}{2} \partial_x(\sigma^2(t, x, V_t)) \right ) u_x v(x) \dd x,
\end{align*}
for any $u, v \in H^1(\reals)$. Then, writing $u(t) \equiv u(t, \cdot)$, we get the following variational formulation for the SPDE \cref{eqn:spdewellposed}:
\begin{align*}
	- \dd \langle u(t), v \rangle_{L^2(\reals)} &= \left ( \langle \LL_t u(t), v \rangle  - \langle \CC_t u(t), v \rangle_{L^2(\reals)} -  \frac{\partial_y (p(t,V_t) \beta(t,V_t))}{p(t,V_t)} \langle  \BB_t u(t) , v \rangle_{L^2(\reals)} \right ) \dd t \\
	&\quad+ \langle \BB_t u(t), v \rangle_{L^2(\reals)} \bd \mrB_t, \\
	\langle u(T), v \rangle_{L^2(\reals)} &=  \langle \varphi, v \rangle_{L^2(\reals)},
\end{align*}
for any $v \in H^1(\reals)$. 
\end{remark}

In order to ensure our main results pertaining to the SPDE \cref{eqn:spdewellposed} are valid, we will here on in enforce the following assumption. 
\begin{assumption}
\label{ass:spde}
\noindent
\begin{enumerate}[label = (C\arabic*), ref = C\arabic*]
\item  $\varphi \in C_c^1(\reals; \reals)$. \label{ass:C1}
\item $\mu, \sigma \in L^{\infty}([0, T] \times \reals \times \reals; \reals)$ and $\alpha, \beta \in L^{\infty}([0, T] \times \reals; \reals)$. \label{ass:C2}
\item $\partial_x \sigma, \partial_y \sigma \in L^{\infty}([0, T] \times \reals \times \reals ; \reals)$ and are continuous in $(x,y)$ on compacts of $[0,T] \times \reals \times \reals$, uniformly in $t$. \label{ass:C3}
\item $\sigma^2(t,x,y) \geq C $ for some constant $C>0$, uniformly in $(t,x,y)$. \label{ass:C4}
\end{enumerate}
\end{assumption}

Lastly, we will need to make the following assumption in order to control the speed of growth of the density of $V_r$.
\begin{assumption}
\label{ass:ratio_density}
Recall $p(r, y)$ is the density of $V_r$.
\begin{align*}
	\left | \frac{\partial_y(p(r, y)\beta(r, y))}{p(r, y)} \right | \leq C \left ( \frac{|y|^{p_1}}{r^{q_1}} + \frac{|y|^{p_2}}{r^{q_2}}\right ),
\end{align*}
where $p_i \geq 0, q_i \in \reals$ and $p_i = 0$ implies $q_i \leq 0$, for $i = 1, 2$.
\end{assumption}

\section{Main results}
\label{sec:mainresults}
\noindent In this section, we provide the main results, which we will then prove in \Cref{sec:proofs}. We reiterate that in the following results, \Crefrange{ass:sde}{ass:ratio_density} are being enforced.

The following theorem is an adaptation of \citep[][Theorem 6.1]{pardoux1982equations}.
\begin{theorem}
\label{thm:spdeexistence} 
There exists a unique solution $u(t,x)$ to the SPDE \cref{eqn:spdewellposed}, adapted to $(\bar \FF_{t,T}^{V, B})_{\tinT}$. Moreover, $t \mapsto u(t,x)$ belongs to $L^2(\vep, T ; H^1(\reals)) \cap C([\vep, T]; L^2(\reals))$ for all $\vep > 0$, $\Qro$ a.s.
\end{theorem}

The following results pertain to the conditional Feynman-Kac formula, and these are extensions of Proposition 6.4 and Theorem 6.5 in \citep{pardoux1982equations}. Our innovation comes from the fact that we are required to condition on the $\sigma$-algebra $\bar \FF_{t,T}^{V,B}$ rather than $\bar \FF_{t,T}^B$ or $\bar \FF_{t,T}^V$, thereby requiring the use of the backward Brownian motion $\mrB$ from the enlarged filtration $(\bar \FF_{t,T}^{V,B})_{\tinT}$ as the backward stochastic integrator. As a consequence of this, enforcing \Cref{ass:ratio_density} is critical.

\begin{proposition}
\label{prop:conditionalfeynmanwellposedprop} 
Let $u(t,x)$ be the unique $(\bar \FF_{t,T}^{V, B})_{\tinT}$-adapted solution to the SPDE \cref{eqn:spdewellposed}. Assume in addition to \Crefrange{ass:sde}{ass:ratio_density} that:
\begin{enumerate}[label = (E\arabic*), ref = E\arabic*]
\item $\varphi \in C_c^{\infty}(\reals;\reals)$. \label{ass:extrareg1}
\item $\mu, \sigma, \alpha, \beta$ possess partial derivatives of all orders in time and space, which in addition, are all bounded, and continuous in space uniformly in $t$ on compacts of $[0, T] \times \reals^2$ for $\mu, \sigma,$ and $[0, T] \times \reals$ for $\alpha, \beta$. \label{ass:extrareg2}
\end{enumerate}
Then for all $t \in (0, T]$ and $x \in \reals$, $u(t,x)$ admits the representation
\begin{align*}
	u(t, x) = \Ex \big [ \varphi(X_T) | X_t = x, \bar \FF_{t,T}^{V,B}]
\end{align*}
$ \Qro$ a.s.
\end{proposition}

The previous proposition will be utilised to prove the following theorem, which is our main result.

\begin{theorem}
\label{thm:conditionalfeynmanwellposed} 
Let $u(t,x)$ be the unique $(\bar \FF_{t,T}^{V, B})_{\tinT}$-adapted solution to the SPDE \cref{eqn:spdewellposed}. Then for all $t \in (0, T]$, $u(t,x)$ admits the representation
\begin{align*}
	u(t, x) = \Ex \big [ \varphi(X_T) | X_t = x, \bar \FF_{t,T}^{V,B}]
\end{align*}
$\dd x \times \dd \Qro$ a.e.
\end{theorem}

\begin{remark}
\label{remark:spdeinformal}
As suggested in \Cref{sec:introduction}, the SPDE \cref{eqn:spdewellposed} can be restated in the \emph{informal} manner:
\begin{align}
\begin{split}
	-\dd u (t, x) &= \left (\LL^x_t - \CC^x_t \right ) u(t,x)\dd t + \BB^x_t u(t,x) \bd B_t, \label{eqn:spdeinformal} \\
	 u(T,x) &= \varphi(x).
\end{split}
\end{align}
However, the SPDE \cref{eqn:spdeinformal} is ill-posed (hence informal), as the backward stochastic integral in this expression is undefined in the It\^o sense. This is because the integrator is $B$, but the integrand, $\BB_t^x u(t,x)$, is $(\bar \FF_{t,T}^{V,B})_{\tinT}$-adapted, and thus It\^o's construction of stochastic integrals will not work. Specifically, the It\^o isometry fails when the integrand is not $(\FF_{t,T}^B)_{\tinT}$-adapted. To remedy this, we must use $\mrB$ as the integrator, which ends up adding a compensating term into the drift (see \Cref{remark:mrBvsB}), yielding the SPDE \cref{eqn:spdewellposed}. For this reason, from now on we may call \cref{eqn:spdewellposed} and \cref{eqn:spdeinformal} the `well-posed SPDE' and `informal SPDE' respectively. In short, there are two correction terms for the well-posed SPDE \cref{eqn:spdewellposed}:
\begin{enumerate}[label = (\arabic*), ref = \arabic*]
\item $\CC_t^x:$ this is a quadratic covariation term introduced due to `time-reversal' of the stochastic integral. This term is also present in the informal SPDE \cref{eqn:spdeinformal}. The intuition is the following: for a simple process $\zeta$ on $\{t = t_0 < \cdots < t_{n-1} < t_n = T\}$, we have 
\begin{align*}
	\sum_{i=0}^{n-1} \zeta_{t_i} \Delta B_i = \sum_{i=0}^{n-1} \zeta_{t_{i+1}} \Delta B_i + \sum_{i=0}^{n-1} \Delta \zeta_i \Delta B_i.
\end{align*}
The LHS is a forward differencing stochastic integral, whereas the RHS is a backward differencing stochastic integral plus a quadratic covariation term.
\item $\frac{\partial_y (p(t,V_t) \beta(t,V_t))}{p(t,V_t)}  \BB_t^x$: this is present in order to introduce $\mrB$ as the backward stochastic integrator, thereby ensuring existence of the stochastic integral (in the It\^o sense) and hence well-posedness of the SPDE, see \Cref{remark:mrBvsB}.
\end{enumerate}
However, it turns out that the informal SPDE \cref{eqn:spdeinformal} is the desired choice in numerical applications. This is because when one discretises time in order to numerically solve the SPDE, the formal and informal versions end up being equivalent, as there is no longer any danger of stochastic integrals being ill-posed. We refer the reader to \Cref{sec:numerics} for further details.

\end{remark}

\begin{remark}
\label{remark:time0issue}
The conditional Feynman-Kac formula (\Cref{thm:conditionalfeynmanwellposed}) does not necessarily hold at $t = 0$, this being the case as \Cref{thm:spdeexistence} states that the well-posed SPDE \cref{eqn:spdewellposed} has a solution belonging to $L^2(\vep,T; H^1(\reals)) \cap C([\vep, T]; L^2(\reals))$, for all $\vep > 0$. Moreover, this issue occurs because we take into account the possibility of the distribution of $V_0$ being degenerate (and this is usually the case in applications). However, for the purposes of establishing a mixed Monte-Carlo PDE method, this is not a problem. 

To see this consider the following. For $s \geq 0$ denote by $\Qro_s$ the measure associated with the solution of the system \crefrange{eqn:system2X}{eqn:system2V} such that $\Qro_s(X_s = x_s, V_s = v_s) = 1$ for some deterministic $x_s, v_s$. Now assume $(X, V)$ is the solution of the system \crefrange{eqn:system2X}{eqn:system2V} under $\Qro_0$; this indeed means the distribution of $V_0$ is degenerate. Furthermore, for simplicity assume $\intr_t = 0$ a.e. on $[0, T]$. Let $\bar u(t, x)$ be given by \cref{vexpression}, where we stress that the conditional expectation in that expression is under $\Qro_0$. For $\delta > 0$, we consider the event $\{X_\delta = x_{\delta}, V_{\delta} = v_{\delta}\}$ for some deterministic $x_\delta, v_\delta$. We now consider the price of a derivative at time $t = \delta > 0$:
\begin{align*}
	H_{\delta} \rind{\{X_\delta = x_{\delta}, V_{\delta} = v_{\delta}\}} &= \Ex_0[\varphi(X_T)| X_{\delta}, V_{\delta}]  \rind{\{X_\delta = x_{\delta}, V_{\delta} = v_{\delta}\}} = \Ex_0[ \bar u(\delta, X_{\delta}) | X_\delta, V_\delta]  \rind{\{X_\delta = x_{\delta}, V_{\delta} = v_{\delta}\}} \\ &= \Ex_0[ \bar u(\delta, x_\delta) | X_\delta = x_\delta, V_\delta = v_\delta]  \rind{\{X_\delta = x_{\delta}, V_{\delta} = v_{\delta}\}}.
\end{align*}
The remarkable point here is that the density function $p(t, y)$ that enters into the well-posed SPDE \cref{eqn:spdewellposed} as well as in the definition of $\mrB$ (\cref{eqn:mrB}) is the one associated with the measure $\Qro_0$, not $\Qro_\delta$. Thus the troublesome point in the well-posed SPDE occurs at time $t = 0$, not $t = \delta$. Ergo, a mixed Monte-Carlo PDE method to simulate $H_\delta \rind{\{X_\delta = x_{\delta}, V_{\delta} = v_{\delta}\}}$ is to numerically solve the SPDE back to time $t = \delta$ to obtain i.i.d. copies of $\bar u(\delta, x_\delta)$ under $\Qro_\delta$, and then average over them.

We note that the same arguments apply if one simply considered shifting the time interval to $[-\tilde \delta, T]$ for some $\tilde \delta > 0$, and then used the preceding strategy to develop a mixed Monte-Carlo PDE method at time $t = 0$. Finally, knowing that a mixed Monte-Carlo PDE method can be established in the well-posed SPDE setting at $t = 0$, it is then legitimate to develop a mixed Monte-Carlo PDE method in the informal SPDE setting, which we indeed do in \Cref{sec:numerics}.

\end{remark}

\section{Proofs of main results}
\label{sec:proofs}

\noindent In this section, we provide the proofs of the main results from \Cref{sec:mainresults}. The strategies utilised in our proofs are similar to those considered in \citep{pardoux1982equations}. Our main innovation comes from the fact that we condition on $\bar \FF_{t, T}^{V, B}$ and thus the backward Brownian motion $\mrB$ defined in \cref{eqn:mrB} must be utilised as the stochastic integrator. In turn, this brings forth a number of non-trivial technicalities in the proofs. Thus, we will highlight aspects of the proofs where the consequences of $\mrB$ become apparent.

For the proofs in this section, we will need to discretise time. Consider a sequence of refining partitions $\PP_n := \{t = t_0^{(n)} < t_1^{(n)} < \cdots < t_{n-1}^{(n)} < t_n^{(n)}= T\}$ of $[t, T]$ where $n \in \naturals$. For brevity, we will usually write $t_i \equiv t_i^{(n)},$ unless the specific dependence on $n$ is required to avoid confusion. Let $ \Delta t \equiv t_{i+1} - t_i =  (T - t)/n$, i.e., each partition is uniform.

Define the sequence $(u_i(x))_{i}$ through the following difference scheme:
\begin{align}
\begin{split}
	u_i(x) - u_{i+1}(x) &= \LLscr_i^xu_i(x) \Delta t - \CCscr_i^x u_{i+1}(x) \Delta t - \AAscr^x_i u_{i+1}(x) \Delta t 
\\&\quad+  \BBscr_i^x u_{i+1}(x) \Delta \mrB_i, \quad i = n-1, \dots, 0, \\
	u_n(x) &= \varphi(x),
	 \label{eqn:diffscheme}
\end{split}
\end{align}
where 
\begin{equation}
\begin{aligned}[c]
	\LLscr_i^x &:= \frac{1}{\Delta t} \int_{t_i}^{t_{i+1}} \LL_{r}^x |_{V_t = V_{t_i}} \dd r, 		&	\LL^x_t |_{V_t = V_{t_i}} &:= \frac{1}{2} \sigma^2(t,x,V_{t_i}) \partial_x^2 + \mu (t, x, V_{t_i}) \partial_x, \\
	\CCscr_i^x &:= \frac{1}{\Delta t} \int_{t_i}^{t_{i+1}} \CC_{r}^x |_{V_t = V_{t_i}} \dd r, 		&	\CC^x_t|_{V_t = V_{t_i}} &:= \rho_t \beta(t,V_{t_i}) \sigma_y(t, x, V_{t_i}) \partial_x, 
\\
	\BBscr_i^x &:=  \frac{1}{\Delta t} \int_{t_i}^{t_{i+1}} \BB_{r}^x |_{V_t = V_{t_{i+1}}} \dd r,  	 &	\BB^x_t|_{V_t = V_{t_{i+1}}} &:=  \rho_t \sigma(t,x,V_{t_{i+1}}) \partial_x, \\
	\AAscr_i^x &:= \frac{1}{\Delta t} \int_{t_i}^{t_{i+1}} \frac{\partial_y(p(r,V_r) \beta(r,V_r))}{p(r,V_r)} \BBscr_i^x  \dd r. & &
	\label{eqn:diffops}
\end{aligned}
\end{equation}
Hence, $\LLscr_i^x, \BBscr_i^x, \CCscr_i^x$ refer to the `average' versions of $\LL_t^x, \BB_t^x, \CC_t^x$ respectively. We will write $\LLscr_i \equiv \LLscr_i^\cdot, \BBscr_i \equiv \BBscr_i^\cdot, \CCscr_i \equiv \CCscr_i^\cdot, \AAscr_i \equiv \AAscr_i^\cdot$. Moreover, we will write $u_i \equiv u_i(\cdot) \in H^1(\reals)$ and $u(r) \equiv u(r, \cdot) \in H^1(\reals)$, where here we considered $\omega \in \Omega$ fixed. Thus, for each $i = n-1, \dots, 0$,  $u_i$ can be thought of as a $\bar \FF_{{t_i}, T}^{V,B}$ measurable random element, taking values in $H^1(\reals)$.

When constructing the difference scheme \cref{eqn:diffscheme}, we have simply discretised the SPDE \cref{eqn:spdewellposed}, however we have replaced the differential operators with their averages where the $V$ argument is frozen at either $t_i$ or $t_{i+1}$ as in \cref{eqn:diffops}. Furthermore. the operators $\LLscr_i^x, \BBscr_i^x, \CCscr_i^x, \AAscr_i^x$ act on either $u_i(x)$ or $u_{i+1}(x)$, this choice has been carefully decided and the reason will become apparent in the below proofs. Moreover, define
\begin{align}
	u^{(n)}(r, x) := \sum_{i=0}^{n-1} u_i(x) \gind{[t_i^{(n)}, t_{i+1}^{(n)})}{r} + u_n(x)\gind{\{t_n^{(n)} \}}{r} \label{eqn:usimple}
\end{align}
which is simple in $r$ on the partition $\PP_n$ for each $n$. We will write $u^{(n)}(r) \equiv u^{(n)}(r, \cdot) \in H^1(\reals)$, where here we considered $\omega \in \Omega$ fixed.

In the following proofs we will need to make use of some asymptotic notation. Consider an arbitrary random field $f$ whose mapping we will write as $f: \reals^2_+ \longrightarrow L^1(\Qro_{t, x})$.
\begin{itemize}
\item $f(r, s, \cdot) = \littleo{s - r}$ if 
\begin{align*}
	\frac{ \Ex_{t,x} |f(r, s, \cdot) |}{|s - r|} \longrightarrow 0 \text{  as  } |s - r| \to 0.
\end{align*} 

\item $f(r, s, \cdot) = \bigO{s -r}$ if there exists a constant $C > 0$ and a sufficiently small $r_0$ such that
\begin{align*}
	\Ex_{t,x} \left | f(r, s, \cdot) \right | \leq C |s - r|, \text{ when } |s - r| < r_0.
\end{align*}
\end{itemize}
The same notation will be used when considering the norm $\Ex | \cdot |$ rather than $\Ex_{t, x} | \cdot |$.

\subsection*{Proof of \Cref{thm:spdeexistence}}
For the rest of the proof we will write $H^1 \equiv H^1(\reals)$ and $H^{-1} \equiv H^{-1}(\reals)$. Recall from \Cref{remark:variational} that $\langle \cdot, \cdot \rangle:H^{-1} \times H^1 \to \reals$ denotes the natural pairing of $H^{-1}$ and $H^1$ and moreover that $\LL_t$ can be interpreted as a family of bounded linear operators in $\mathbb{B}(H^1, H^{-1})$. Hence $I - \Delta t \LLscr_i$ is coercive for a sufficiently small $\Delta t$ by virtue of \Cref{ass:spde}, where $I$ denotes the identity operator. The idea is now classical; we would like that the sequence $(u^{(n)})_n$ defined in \cref{eqn:usimple} is bounded in $L^2(\Omega; L^2(t, T; H^1)) \cap L^2(\Omega; L^\infty(t, T; L^2(\reals)))$. This in turn will imply that there is a subsequence of  $(u^{(n)}(r))_n$ which converges weakly in $L^2(\reals \times \Omega)$ for all $r \in [t, T]$. This limiting function would then solve the SPDE \cref{eqn:spdewellposed}. 

Unfortunately the sequence $(u^{(n)})_n$ defined in \cref{eqn:usimple} is not guaranteed to be bounded in $L^2(\Omega; L^2(t, T; H^1)) \cap L^2(\Omega; L^\infty(t, T; L^2(\reals)))$ due to the presence of the operator $\AAscr_i^x$ (the reason for this will be clear later). Hence, what we do is perform the following truncation: For $R > 0$ define
\begin{align*}
	V_r^R := V_r \frac{|V_r| \wedge Rr^k}{|V_r|}\rind{\{r > 0\}}+ V_0 \rind{\{r = 0\}}
\end{align*}
where $k > 0$ is a parameter. It is clear that $V_r^R$ converges to $V_r$ as $R \to \infty$ pointwise in $r$. We then modify the operator $\AAscr_i^x$ with a truncated version of it, namely, 
\begin{align*}
	\AAscr^{R,x}_i := \frac{1}{\Delta t} \int_{t_i}^{t_{i+1}} \frac{\partial_y(p(r,V^R_r) \beta(r,V^R_r))}{p(r,V^R_r)} \BBscr_i^x  \dd r.
\end{align*}
We will write $\AAscr_i^{R, \cdot} \equiv \AAscr_i^R$. We also define the following indicator random variable
\begin{align}
	\gamma_R := \rind{\{\sup_{t \leq r \leq T} |V_r| \leq Rt^k \}}, \label{eqn:gamma}
\end{align}
which we note yields $\gamma_R V_r^R  = \gamma_R V_r$. This suggests that we should define a modified sequence $(u^{(R)}_i(x))_{i}$ through the difference scheme:
\begin{align}
\begin{split}
	u^{(R)}_i(x) - u^{(R)}_{i+1}(x) &= \LLscr_i^xu^{(R)}_i(x) \Delta t - \CCscr_i^x u^{(R)}_{i+1}(x) \Delta t - \AAscr^{R,x}_i u^{(R)}_{i+1}(x) \Delta t 
\\&\quad+  \BBscr_i^x u^{(R)}_{i+1}(x) \Delta \mrB_i, \quad i = n-1, \dots, 0, \\
	u_n^{(R)}(x) &= \varphi(x),
	 \label{eqn:diffschemeR}
\end{split}
\end{align}
where we will write $u^{(R)}_i \equiv u^{(R)}_i(\cdot) \in H^1$ considering $\omega \in \Omega$ as fixed. Moreover, define
\begin{align}
	u^{(R, n)}(r, x) := \sum_{i=0}^{n-1} u^{(R)}_i(x) \gind{[t_i^{(n)}, t_{i+1}^{(n)})}{r} + u_n^{(R)}(x)\gind{\{t_n^{(n)} \}}{r} \label{eqn:usimpleR}
\end{align}
which is simple in $r$ on $\PP_n$ for each $n$. Again, we will write $u^{(R, n)}(r) \equiv u^{(R, n)}(r, \cdot) \in H^1$ where we consider $\omega \in \Omega$ as fixed. 

Thus instead of working with $(u^{(n)})_n$ defined in \cref{eqn:usimple}, we will now work with $(u^{(R,n)})_n$ defined in \cref{eqn:usimpleR}. To reiterate, we intend to prove that $(u^{(R,n)})_n$ is bounded in $L^2(\Omega; L^2(t, T; H^1)) \cap L^2(\Omega; L^\infty(t, T; L^2(\reals)))$. Once this is true, then there will exist a subsequence $(u^{(R, n_j)}(r))_j$ and element $u^{(R)}(r)$ such that $u^{(R, n_j)}(r) \to u^{(R)}(r)$ weakly in $L^2(\reals \times \Omega)$ for all $r \in [t, T]$. It is then not hard to show that the weak limit $u^{(R)}$ will solve the SPDE
\begin{align}
\begin{split}
	-\dd u^{(R)} (t, x) &= \left (\LL^x_t - \CC^x_t - \frac{\partial_y (p(t,V^R_t) \beta(t,V^R_t) )}{p(t,V^R_t)} \BB_t^x \right ) u^{(R)}(t,x)\dd t + \BB^x_t u^{(R)}(t,x) \bd \mrB_t, \label{eqn:spdewellposedR} \\
	 u^{(R)}(T,x) &= \varphi(x).
\end{split}
\end{align}	
Finally, by definition of $\gamma_R$ and $u^{(R, n)}$ we will get  $\gamma_R u^{(R, n)} = \gamma_R u^{(n)}$ and $\gamma_R u^{(R)} = \gamma_R u$.

Now we proceed in proving that $(u^{(R, n)})_n$ is bounded in $L^2(\Omega; L^2(t, T; H^1)) \cap L^2(\Omega; L^\infty(t, T; L^2(\reals)))$. First of all, we have 
\begin{align*}
	\| u^{(R, n)} \|^2_{L^2(\Omega; L^2(t, T; H^1))} &= \Ex \left [ \int_t^T \| u^{(R, n)}(r, \cdot) \|^2_{H^1} \dd r \right ] = \sum_{i = 0}^{n-1} \Ex \left [ \int_{t_i}^{t_{i+1}} \| u^{(R)}_i \|^2_{H^1} \dd r \right ]\\
					&= \sum_{i = 0}^{n-1} \Delta t \Ex \left [ \| u^{(R)}_i \|^2_{H^1} \right ]
\end{align*}
and
\begin{align*}
	\| u^{(R, n)} \|^2_{L^2(\Omega; L^\infty(t, T; L^2(\reals)))} &= \Ex \left [ \sup_{i = 0, 1, \dots, n} \| u_i^{(R)} \|^2_{L^2(\reals)} \right ].
\end{align*}
Recall the variational formulation of the SPDE from \Cref{remark:variational}. Now rearrange the difference scheme \cref{eqn:diffschemeR} as 
\begin{align}
	u^{(R)}_i - u^{(R)}_{i+1} - \left ( \LLscr_i u^{(R)}_i - \CCscr_i u^{(R)}_{i+1} - \AAscr^{R}_i u^{(R)}_{i+1} \right ) \Delta t &=   \BBscr_i u^{(R)}_{i+1} \Delta \mrB_i, \label{eqn:diffschemeRalt}
\end{align}
and then take the square of both sides, yielding the inequality
\begin{align}
\begin{split}
	&\| u^{(R)}_i - u^{(R)}_{i+1} \|_{L^2(\reals)}^2 \\
	&- 2 \Delta t \left ( \left \langle \LLscr_i u^{(R)}_i, u^{(R)}_i - u^{(R)}_{i+1} \right \rangle - \left \langle \CCscr_i u^{(R)}_{i+1}, u^{(R)}_i - u^{(R)}_{i+1} \right \rangle_{L^2(\reals)} - \left \langle \AAscr^{R}_i u^{(R)}_{i+1}, u^{(R)}_i - u^{(R)}_{i+1} \right \rangle_{L^2(\reals)} \right ) \\ &\leq   \| \BBscr_i u^{(R)}_{i+1} \|^2_{L^2(\reals)} (\Delta \mrB_i)^2. \label{eqn:squareddiffschemeR}
\end{split}
\end{align}
Moreover, multiplying \cref{eqn:diffschemeRalt} with $2 u_{i+1}^{(R)}$ yields
\begin{align}
\begin{split}
	&2 \left \langle u_{i+1}^{(R)}, u^{(R)}_i - u^{(R)}_{i+1} \right \rangle_{L^2(\reals)} - 2 \Delta t \left ( \left \langle \LLscr_i u^{(R)}_i, u_{i+1}^{(R)} \right \rangle - \left \langle \CCscr_i u^{(R)}_{i+1}, u_{i+1}^{(R)} \right \rangle_{L^2(\reals)} - \left \langle \AAscr^{R}_i u^{(R)}_{i+1}, u_{i+1}^{(R)} \right \rangle_{L^2(\reals)} \right ) \\
	&=   \langle u_{i+1}^{(R)}, \BBscr_i u^{(R)}_{i+1} \rangle_{L^2(\reals)} \Delta \mrB_i. \label{eqn:innerdiffschemeR}
\end{split}
\end{align}
Adding \cref{eqn:squareddiffschemeR} and \cref{eqn:innerdiffschemeR} together yields
\begin{align*}
	&\| u_i^{(R)} \|^2_{L^2(\reals)} - \| u_{i+1}^{(R)} \|^2_{L^2(\reals)} - 2\Delta t \left ( \left \langle \LLscr_i u^{(R)}_i, u_i^{(R)} \right \rangle - \left \langle \CCscr_i u^{(R)}_{i+1}, u_i^{(R)} \right \rangle_{L^2(\reals)}  - \left \langle \AAscr^{R}_i u^{(R)}_{i+1}, u_i^{(R)} \right \rangle_{L^2(\reals)} \right )\\
	 &\leq \| \BBscr_i u^{(R)}_{i+1} \|^2_{L^2(\reals)} (\Delta \mrB_i)^2 +   \langle u_{i+1}^{(R)}, \BBscr_i u^{(R)}_{i+1} \rangle_{L^2(\reals)} \Delta \mrB_i.
\end{align*}
Now taking expectation and sum of the preceding expression yields
\begin{align}
\label{eqn:exdiffscheme}
\begin{split}
	&\Ex \| u_m^{(R)} \|^2_{L^2(\reals)} - \Ex \| u_n^{(R)} \|^2_{L^2(\reals)} 
	- 2\Delta t \sum_{i = m}^{n-1} \Ex \left ( \left \langle \LLscr_i u^{(R)}_i, u_i^{(R)} \right \rangle - \left \langle \CCscr_i u^{(R)}_{i+1}, u_i^{(R)} \right \rangle_{L^2(\reals)}  - \left \langle \AAscr^{R}_i u^{(R)}_{i+1}, u_i^{(R)} \right \rangle_{L^2(\reals)} \right ) \\
	 &\leq \Ex \sum_{i=m}^{n-1} \| \BBscr_i u^{(R)}_{i+1} \|^2_{L^2(\reals)} \Delta t.
\end{split}
\end{align}
Note that to obtain the right hand side of \cref{eqn:exdiffscheme} we have towered with $\bar \FF_{t_{i+1}, T}^{V, B}$ and used that $\BBscr_i u_{i+1}^{(R)}$ is a $\bar \FF_{t_{i+1}, T}^{V, B}$-measurable random element and that $\Delta \mrB_i$ is independent of $\bar \FF_{t_{i+1}, T}^{V, B}$.

As alluded to before, there are some intricacies with the term $\Ex[ | \langle \AAscr^{R}_i u^{(R)}_{i+1}, u_i^{(R)} \rangle_{L^2(\reals)} |]$. Thankfully our truncation method prevents any difficulties from arising, as
\begin{align*}
	 \| \AAscr_i^{R} u_{i+1}^{(R)} \|^2_{L^2(\reals)} &= \left \| \frac{1}{\Delta t} \int_{t_i}^{t_{i+1}} \frac{\partial_y(p(r,V^R_r) \beta(r,V^R_r))}{p(r,V^R_r)} \BBscr_i u^{(R)}_{i+1} \dd r \right \|^2_{L^2(\reals)} \\
		&= \frac{1}{(\Delta t)^2} \left (\int_{t_i}^{t_{i+1}} \frac{\partial_y(p(r,V^R_r) \beta(r,V^R_r))}{p(r,V^R_r)} \dd r \right )^2  \left \| \BBscr_i u^{(R)}_{i+1}\right \|^2_{L^2(\reals)} \\
		&\leq \frac{1}{(\Delta t)^2} 2C^2 \Delta t \left ( \int_{t_i}^{t_{i+1}} \left (\frac{R^{2p_1}}{r^{2(q_1 -k p_1)}} + \frac{R^{2p_2}}{r^{2(q_2 - kp_2)}} \right) \dd r \right ) \left \| \BBscr_i u^{(R)}_{i+1} \right \|^2_{L^2(\reals)}\\
		&\leq \frac{2C^2}{\Delta t} \left (\int_{t_i}^{t_{i+1}} \left (\frac{R^{2p_1}}{r^{2(q_1 - k p_1)}} + \frac{R^{2p_2}}{r^{2(q_2 - k p_2)}} \right) \dd r \right ) \|u_{i+1}^{(R)}\|^2_{H^1}.
\end{align*}
Note we have used \Cref{ass:ratio_density} in order to obtain the first inequality above, since
\begin{align*}
	\left |\frac{\partial_y(p(r, V_r^R) \beta(r, V_r^R))}{p(r, V_r^R)} \right | \leq C \left ( \frac{|V_r^R|^{p_1}}{r^{q_1}} + \frac{|V_r^R|^{p_2}}{r^{q_2}} \right )  \leq C \left ( \frac{R^{p_1}}{r^{q_1 - kp_1}} + \frac{R^{p_2}}{r^{q_2 - kp_2}} \right ).
\end{align*}
Furthermore by choosing $k >0$ large enough, we have that
\begin{align}
\label{eqn:truncationblowup}
	\int_{t_i}^{t_{i+1}} \left (\frac{R^{2p_1}}{r^{2(q_1 - kp_1)}} + \frac{R^{2 p_2}}{r^{2(q_2 - kp_2)}} \right) \dd r =  \bigO{\Delta t}
\end{align}
due to the conditions imposed on $p_i$ and $q_i$ in \Cref{ass:ratio_density}. Now define 
\begin{equation}
\begin{aligned}[c]
\label{eqn:piecewiseoperators}
	\bar \LL^x(r) &:= \sum_{i = 0}^{n-1} \LLscr_i^x \gind{[t_i, t_{i+1})}{r}, 		\quad & 	\bar \BB^x(r) &:= \sum_{i = 0}^{n-1} \BBscr_i^x \gind{[t_i, t_{i+1})}{r}, \\
	\bar \CC^x(r) &:= \sum_{i = 0}^{n-1} \CCscr_i^x \gind{[t_i, t_{i+1})}{r}, 	\quad &	\bar \AA^{R, x}(r) &:= \sum_{i = 0}^{n-1} \AAscr_i^{R,x} \gind{[t_i, t_{i+1})}{r}.
\end{aligned}
\end{equation}
It is then clear that $\bar \LL : [0, T] \to \mathbb{B}(H^1, H^{-1})$, and $\bar \BB, \bar \CC, \bar \AA^R: [0, T] \to \mathbb{B}(H^1, L^2(\reals))$. At this point the proof follows in a similar manner to the end of \citep[][Lemma 3.1, part II]{pardoux1979stochastic}, which itself is an adaptation of classical existence and uniqueness arguments for parabolic PDEs, a good reference for such arguments can be found in \citep[][\S 7.1]{evans2010partial}. More specifically, the end of the proof involves rewriting \cref{eqn:exdiffscheme} in terms of the operators from \cref{eqn:piecewiseoperators} and appealing to classical energy estimates.

\qed

\begin{remark}
Notice that utilising the truncation $V_r^R$ is vital. Without it, we would not be able to ensure that \cref{eqn:truncationblowup} holds for every $t_i, t_{i+1} \in [t, T]$. As a simple example, consider the case of $V_r = B_r$ without truncation. Then 
\begin{align*}
	\int_{t_i}^{t_{i+1}} \Ex \left [\left ( \frac{\partial_y(p(r, B_r) \beta(r, B_r))}{p(r, B_r)} \right )^2 \right ]\dd r= \int_{t_i}^{t_{i+1}} \Ex \left [ \left (\frac{B_r}{r} \right )^2 \right ] \dd r = \int_{t_i}^{t_{i+1}} \frac{1}{r} \dd r
\end{align*}
which is not $\bigO{\Delta t}$ when $t_i = t$ and $t = 0$.
\end{remark}

\subsection*{Proof of \Cref{prop:conditionalfeynmanwellposedprop}}By \Cref{thm:spdeexistence}, there exists a unique $(\bar \FF_{t,T}^{V,B})_{\tinT}$-adapted solution to the SPDE \cref{eqn:spdewellposed} belonging to $L^2(\vep, T ; H^1(\reals)) \cap C([\vep, T]; L^2(\reals))$ for all $\vep > 0$, $\Qro$ a.s., which we will denote by $u(t,x)$. 

Recall the difference scheme \cref{eqn:diffscheme} and $\gamma_R$ defined in \cref{eqn:gamma}. It can be shown that under the additional assumptions \cref{ass:extrareg1} and \cref{ass:extrareg2}, the sequence $\gamma_R u^{(n)}$ is bounded in $L^2 (\Omega ; L^{\infty}(0, T ; H^k(\reals)))$ for all $k \in \naturals \cup \{0 \}$, see \citep[][Lemma 6.3]{pardoux1982equations}. Hence for any $l \in \naturals$, the order of Sobolev space $k$ can be chosen arbitrarily large such that $k > \frac{1}{2} + l$ holds. This implies that the sequence $\gamma_R u^{(n)}$ is in fact bounded in $L^2 (\Omega ; L^{\infty}(0, T ; C^l_b(\reals)))$ via a standard Sobolev embedding theorem.

We will write $\Ex_{t,x}^{t,T}[\cdot ] \equiv \Ex[\cdot | X_t = x, \bar \FF_{t,T}^{V,B}]$. Now consider
\begin{align}
	\gamma_R \Ex_{t,x}^{t,T} \left [ \sum_{i =0}^{n-1} u^{(n)}(t_{i+1}, X_{t_{i+1}}) - u^{(n)}(t_i, X_{t_i}) \right ] = \gamma_R \left ( \Ex_{t,x}^{t,T} [ \varphi(X_T)] - u^{(n)}(t, x) \right ). \label{eqn:exofincrements}
\end{align}
Similar to arguments made in the proof of \Cref{thm:spdeexistence}, as $n \to \infty$ the RHS of \cref{eqn:exofincrements} tends to $ \gamma_R \left ( \Ex_{t,x}^{t,T} [ \varphi(X_T)] - u(t, x) \right )$ weakly in $L^2(\reals \times \Omega)$, pointwise in $t$ along a subsequence, which we will from now on identify with the original sequence. Our task now is to show that the LHS of \cref{eqn:exofincrements} tends to 0 in $L^1(\Qro_{t,x})$ as $n \to \infty$, or equivalently, as $\Delta t \to 0$. We will eventually see that this suffices for proving the proposition.

Focusing on the increment of $u^{(n)}(r, X_r)$ over $[t_i, t_{i+1})$, we can decompose it as follows:
\begin{align*}
	u^{(n)}(t_{i+1}, X_{t_{i+1}}) - u^{(n)}(t_i, X_{t_i}) &= \left [ u^{(n)}(t_{i+1}, X_{t_{i+1}}) - u^{(n)}(t_{i+1}, X_{t_i}) \right ] + \left [u^{(n)}(t_{i+1}, X_{t_i}) - u^{(n)}(t_i, X_{t_i}) \right ] \\
&= \chi_i + \tau_i,
\end{align*}
where 
\begin{align*}
	\chi_i &:= u^{(n)}(t_{i+1}, X_{t_{i+1}}) - u^{(n)}(t_{i+1}, X_{t_i}), & \tau_i &:=u^{(n)}(t_{i+1}, X_{t_i}) - u^{(n)}(t_i, X_{t_i}).
\end{align*}
Notice that for $\chi_i$, space is moving and time is fixed, whereas for $\tau_i$ space is fixed and time is moving. 
We can rewrite $\chi_i$ using Taylor's theorem with Lagrange remainder: 
\begin{align*}
	\chi_i =  u^{(n)}(t_{i+1}, X_{t_{i+1}}) - u^{(n)}(t_{i+1}, X_{t_i}) = u^{(n)}_x(t_{i+1}, X_{t_i}) \Delta X_i + \frac{1}{2} u^{(n)}_{xx}(t_{i+1}, H_{t_i}) (\Delta X_i)^2
\end{align*}
where $H_{t_i} \in [X_{t_i}, X_{t_{i+1}}]$.
For $\tau_i$, we can use the difference scheme \cref{eqn:diffscheme}, as $\tau_i = u_{i+1}(X_{t_i}) - u_{i}(X_{t_i})$, yielding
\begin{align*}
	 \tau_i =  u^{(n)}(t_{i+1}, X_{t_i}) - u^{(n)}(t_i, X_{t_i}) &= -\LLscr_i^{X_{t_i}} u^{(n)} (t_i, X_{t_i}) \Delta t + \CCscr_i^{X_{t_i}} u^{(n)}(t_{i+1}, X_{t_i}) \Delta t \\&\quad + \AAscr^{X_{t_i}}_i u^{(n)}(t_{i+1}, X_{t_i}) \Delta t -  \BBscr_i^{X_{t_i}} u^{(n)}(t_{i+1}, X_{t_i}) \Delta \mrB_i.
\end{align*}
But $\Delta \mrB_i = \Delta B_i +  \int_{t_i}^{t_{i+1}} \frac{\partial_y(p(r,V_r) \beta(r,V_r))}{p(r,V_r)} \dd r$, which allows us to eliminate the preceding $\AAscr_i^{X_{t_i}}$ term, thus 
\begin{align}
\label{eqn:tauidiff}
\begin{split}
	 \tau_i &=  u^{(n)}(t_{i+1}, X_{t_i}) - u^{(n)}(t_i, X_{t_i}) \\ 
	 &= -\LLscr_i^{X_{t_i}} u^{(n)} (t_i, X_{t_i}) \Delta t + \CCscr_i^{X_{t_i}} u^{(n)}(t_{i+1}, X_{t_i}) \Delta t - \BBscr_i^{X_{t_i}} u^{(n)}(t_{i+1}, X_{t_i}) \Delta B_i.
\end{split}
\end{align}
Now we expand the terms in $\chi_i$ and $\tau_i$. To expand $\chi_i$ we substitute in
\begin{align*}
	\Delta X_i &= \int_{t_i}^{t_{i+1}} \dd X_r 
	\\& = \int_{t_i}^{t_{i+1}} \mu(r, X_r, V_r) \dd r + \int_{t_i}^{t_{i+1}} \rho_r \sigma(r, X_r, V_r) \dd B_r + \int_{t_i}^{t_{i+1}} \varrho_r \sigma(r, X_r, V_r) \dd \hat B_r.
\end{align*}
Furthermore, to expand $\tau_i$ we substitute in the explicit expressions for $\LLscr_i^{X_{t_i}} u^{(n)} (t_i, X_{t_i}) \Delta t$, $\BBscr_i^{X_{t_i}} u^{(n)} (t_{i+1}, X_{t_i}) \Delta B_i$, and $\CCscr_i^{X_{t_i}} u^{(n)} (t_{i+1}, X_{t_i}) \Delta t $, which are
\begin{align*}
	\LLscr_i^{X_{t_i}} u^{(n)} (t_i, X_{t_i}) \Delta t &= \frac{1}{2} u^{(n)}_{xx}(t_i, X_{t_i}) \int_{t_i}^{t_{i+1}} \sigma^2(r, X_{t_i}, V_{t_i}) \dd r + u_x^{(n)}(t_i, X_{t_i}) \int_{t_i}^{t_{i+1}} \mu(r, X_{t_i}, V_{t_i}) \dd r, \\
	\BBscr_i^{X_{t_i}} u^{(n)} (t_{i+1}, X_{t_i}) \Delta B_i &=  u^{(n)}_{x}(t_{i+1}, X_{t_i}) \int_{t_i}^{t_{i+1}} \rho_r \sigma(r, X_{t_i}, V_{t_{i+1}}) \dd r \frac{\Delta B_i}{\Delta t}, \\
	\CCscr_i^{X_{t_i}} u^{(n)} (t_{i+1}, X_{t_i}) \Delta t &= u^{(n)}_{x}(t_{i+1}, X_{t_i}) \int_{t_i}^{t_{i+1}} \rho_r \beta(r, V_{t_i}) \sigma_y(r, X_{t_i}, V_{t_i}) \dd r.
\end{align*}
Combining $\chi_i$ and $\tau_i$ after the appropriate substitutions finally yields
\begin{align*}
	u^{(n)}(t_{i+1}, X_{t_{i+1}}) - u^{(n)}(t_i, X_{t_i}) = \XX_i^{(n)} + \YY_i^{(n)} + \ZZ_i^{(n)} + \WW_i^{(n)},
\end{align*}
where
\begin{align*}
	\XX_i^{(n)} &:=  u_x^{(n)}(t_{i+1}, X_{t_i}) \int_{t_i}^{t_{i+1}} \mu(r, X_r, V_r) \dd r - u_x^{(n)}(t_i, X_{t_i}) \int_{t_i}^{t_{i+1}} \mu(r, X_{t_i}, V_{t_i}) \dd r,  \\
	\YY_i^{(n)} &:= \frac{1}{2} u_{xx}^{(n)}(t_{i+1}, H_{t_i}) (\Delta X_i)^2 - \frac{1}{2} u_{xx}^{(n)}(t_i, X_{t_i}) \int_{t_i}^{t_{i+1}} \sigma^2(r, X_{t_i}, V_{t_i}) \dd r,  \\
	\ZZ_i^{(n)} &:= u_x^{(n)}(t_{i+1}, X_{t_i}) \int_{t_i}^{t_{i+1}} \rho_r \sigma(r, X_r, V_r) \dd B_r - u_x^{(n)}(t_{i+1}, X_{t_i}) \int_{t_i}^{t_{i+1}} \rho_r \sigma(r, X_{t_i}, V_{t_{i+1}}) \dd r \frac{\Delta B_i}{\Delta t} \\& \quad + u_x^{(n)}(t_{i+1}, X_{t_i})  \int_{t_i}^{t_{i+1}} \rho_r \beta(r, V_{t_i}) \sigma_y(r, X_{t_i}, V_{t_i}) \dd r, \\
	\WW_i^{(n)} &:=  u_x^{(n)}(t_{i+1}, X_{t_i}) \int_{t_i}^{t_{i+1}} \varrho_r \sigma(r, X_r, V_r) \dd \hat B_r.
\end{align*}
Thus \cref{eqn:exofincrements} can be rewritten as
\begin{align}
	\gamma_R \Ex_{t,x}^{t,T} \left [ \sum_{i =0}^{n-1}   \XX_i^{(n)} + \YY_i^{(n)} + \ZZ_i^{(n)} + \WW_i^{(n)}   \right ] = \gamma_R \left ( \Ex_{t,x}^{t,T} [ \varphi(X_T)] - u^{(n)}(t, x) \right ). \label{eqn:exofincrements2}
\end{align}
Note that as $\gamma_R \leq 1$ it suffices to show that 
\begin{align*}
 	 \Ex_{t,x}^{t,T} \left [ \sum_{i =0}^{n-1}   \XX_i^{(n)} \right ], \Ex_{t,x}^{t,T} \left [ \sum_{i =0}^{n-1}   \YY_i^{(n)} \right ], \Ex_{t,x}^{t,T} \left [ \sum_{i =0}^{n-1} \ZZ_i^{(n)} \right ], \Ex_{t,x}^{t,T} \left [ \sum_{i =0}^{n-1} \WW_i^{(n)} \right ]
\end{align*}
each converge to $0$ in $L^1(\Qro_{t,x})$ as $\Delta t \to 0$, which we will do case by case. Note that we can immediately ignore $\WW_i^{(n)}$ as it will be zero after taking $\Ex_{t,x}^{t,T}$ and then towering with $\Ex_{t,x}^{t,T}[\cdot | X_{t_i}]$, due to the independence of $\bar \FF_{t,T}^{V,B}$ and $\hat B$.

It should be clear as to why we reexpressed \cref{eqn:exofincrements} as \cref{eqn:exofincrements2}. From the forms of $\XX_i^{(n)}$ and $\YY_i^{(n)}$, one can already postulate that 
\begin{align*}
	\Ex_{t,x}^{t,T} \sum_{i=0}^{n-1} \XX_i^{(n)} &\longrightarrow 0 \quad \text{ and } \quad  \Ex_{t,x}^{t,T} \sum_{i=0}^{n-1} \YY_i^{(n)} \longrightarrow 0
\end{align*}
in $L^1(\Qro_{t,x})$. The term $\ZZ_i^{(n)}$ is more puzzling; essentially there is an extra term from the SPDE \cref{eqn:spdewellposed} given through $\CC_t^x$ due to time reversal of the stochastic integral w.r.t. $B$, this extra term essentially being the quadratic covariation of $B$ and the corresponding integrand.

Note through the tower property we have 
\begin{align*}
	\Ex_{t,x} \left | \sum_{i=0}^{n-1} \Ex_{t,x}^{t,T} \left [ \cdot \right ] \right | \leq \sum_{i=0}^{n-1} \Ex_{t,x} | \cdot |.
\end{align*}
Hence, in order to prove the proposition, it is sufficient to show that terms within the summation are $\littleo{\Delta t}$. Furthermore, it will often suffice to neglect second-order terms when applying It\^o's formula and simply write them as $\bigO{\Delta t}$, since applying a Riemann or It\^o integration to a $\bigO{\Delta t}$ term over $[t_i, t_{i+1}]$ yields a $\littleo{\Delta t}$ term. Moreover, to get some intuition as to whether terms will contribute or not, one should preemptively attempt to determine each integral's order of contribution, noting that the iteration of integrals (whether it be Riemann or It\^o) will decrease that term's order of contribution.

Before proceeding, recall that the sequence $\gamma_R u^{(n)}$ is bounded in $L^2 (\Omega ; L^{\infty}(0, T ; C^l_b(\reals)))$ for any $l \in \naturals$. This ensures that any terms we encounter involving $u^{(n)}$ and its partial derivatives w.r.t. $x$ in the summation do not explode as $\Delta t \to 0$ in $L^1(\Qro_{t,x})$, noting that we can bring in $\gamma_R$ into our calculations if necessary by \cref{eqn:exofincrements2}.

$\recbullet$ We will first show $\Ex_{t,x}^{t,T} \sum_{i=0}^{n-1} \XX_i^{(n)}$ tends to 0 in $L^1(\Qro_{t,x})$. By It\^o's formula, we can rewrite 
\begin{align*}
	\mu(r, X_r, V_r) &= \mu(r, X_{t_i}, V_{t_i}) + \int_{t_i}^r \mu_x(r, X_\theta, V_\theta) \dd X_\theta + \int_{t_i}^r \mu_y(r, X_\theta, V_\theta) \dd V_\theta + \bigO{\Delta t}.
\end{align*}
Substituting this into the expression for $\XX_i^{(n)}$ yields
\begin{align}
\begin{split}
	\XX_i^{(n)} &= \big (u_x^{(n)}(t_{i+1}, X_{t_i})  - u_x^{(n)}(t_{i}, X_{t_i}) \big )  \int_{t_i}^{t_{i+1}} \mu(r, X_{t_i}, V_{t_i}) \dd r \\
	&\quad + u_x^{(n)}(t_{i+1}, X_{t_i})   \int_{t_i}^{t_{i+1}} \left (\int_{t_i}^r \mu_x(r, X_\theta, V_\theta) \dd X_\theta + \int_{t_i}^r \mu_y(r, X_\theta, V_\theta) \dd V_\theta \right ) \dd r + \littleo{\Delta t}. \label{eqn:XXreexp}
\end{split}
\end{align}
Note the $\bigO{\Delta t}$ term has become $\littleo{\Delta t}$ after applying $\int_{t_i}^{t_{i+1}} (\cdots) \dd r$ to it. 

We now focus on the first term on the RHS of \cref{eqn:XXreexp}. In order to treat it, we first recognise that $\mu$ is bounded. Ergo, it is now enough to show that $ u_x^{(n)}(t_{i+1}, X_{t_i})  - u_x^{(n)}(t_{i}, X_{t_i}) = \littleo{1}.$ This follows from noting that $-(u_{i}(X_{t_i}) - u_{i+1}(X_{t_i}))=u^{(n)}(t_{i+1}, X_{t_i}) - u^{(n)}(t_i, X_{t_i})$, and differentiating \cref{eqn:diffscheme} in $x$. Since $\gamma_R u^{(n)}$ is bounded in $L^2(\Omega ; L^{\infty}(0, T;C_b^3(\reals)))$, we can conclude that the term is at least $\littleo{1}$. This yields
\begin{align*}
	\big (u_x^{(n)}(t_{i+1}, X_{t_i})  - u_x^{(n)}(t_{i}, X_{t_i}) \big )  \int_{t_i}^{t_{i+1}} \mu(r, X_{t_i}, V_{t_i}) \dd r & \leq C \Delta t \big (u_x^{(n)}(t_{i+1}, X_{t_i})  - u_x^{(n)}(t_{i}, X_{t_i}) \big )  \\&= \littleo{\Delta t}.
\end{align*}
For the next term in \cref{eqn:XXreexp} we can expand this out to get
\begin{align}
	 u_x^{(n)}(t_{i+1}, X_{t_i})   \int_{t_i}^{t_{i+1}} \left (\int_{t_i}^r \mu_x(r, X_\theta, V_\theta) \dd X_\theta + \int_{t_i}^r \mu_y(r, X_\theta, V_\theta) \dd V_\theta \right ) \dd r \nonumber\\
	 = u_x^{(n)}(t_{i+1}, X_{t_i}) \int_{t_i}^{t_{i+1}} \left (\int_{t_i}^r a_{r,\theta} \dd \theta + \int_{t_i}^r b_{r, \theta} \dd B_\theta + \int_{t_i}^r c_{r, \theta} \dd \hat B_\theta \right ) \dd r \label{eqn:XXalt1}
 \end{align}
where for example 
\begin{align*}
	a_{r,\theta} = \mu_x(r, X_\theta, V_\theta) \mu(\theta, X_\theta, V_\theta) + \mu_y(r, X_\theta, V_\theta) \alpha(\theta, V_\theta)
\end{align*}
and we can obtain $b_{r,\theta}$ and $c_{r, \theta}$ in a similar fashion. However, their explicit expressions are not important, we just need that they are bounded, and thus we omit writing them. It is simple to show that the $\dd \hat B$ integral term in \cref{eqn:XXalt1} is zero after taking $\Ex_{t,x}^{t,T}$ and then towering with $\Ex_{t,x}^{t,T}[\cdot | X_{t_i}]$. Focusing on the $\dd B$ integral term in \cref{eqn:XXalt1} we have
\begin{align*}
	&\Ex_{t,x}\left | \sum_{i=0}^{n-1} \Ex_{t,x}^{t,T} \left [u_x^{(n)}(t_{i+1}, X_{t_i}) \int_{t_i}^{t_{i+1}} \left (\int_{t_i}^r b_{r,\theta} \dd B_\theta \right ) \dd r   \right ] \right | \\
	&\leq \sum_{i=0}^{n-1} \Ex_{t,x} \left | u_x^{(n)}(t_{i+1}, X_{t_i}) \int_{t_i}^{t_{i+1}} \left (\int_{t_i}^r b_{r,\theta} \dd B_\theta \right ) \dd r  \right | \\
	& \leq \sum_{i=0}^{n-1} \left ( \Ex_{t,x} \left [ u_x^{(n)}(t_{i+1}, X_{t_i}) \right ]^2 \right )^{1/2}  
	\left ( \Ex_{t, x} \left [ \int_{t_i}^{t_{i+1}} \left (\int_{t_i}^r b_{r,\theta} \dd B_\theta \right ) \dd r  \right ]^2 \right )^{1/2}.
\end{align*}
Using Jensen's inequality we have
\begin{align*}
	\Ex_{t, x} \left ( \int_{t_i}^{t_{i+1}} \left (\int_{t_i}^r b_{r,\theta} \dd B_\theta \right ) \dd r  \right )^2 \leq \Delta t \int_{t_i}^{t_{i+1}} \Ex_{t,x} \left ( \int_{t_i}^r b_{r, \theta} \dd B_\theta \right )^2 \dd r = \Delta t \int_{t_i}^{t_{i+1}} \left ( \int_{t_i}^r \Ex_{t,x} (b^2_{r, \theta}) \dd \theta \right )  \dd r.
\end{align*}
Thus we have 
\begin{align*}
	 u_x^{(n)}(t_{i+1}, X_{t_i}) \int_{t_i}^{t_{i+1}} \left (\int_{t_i}^r b_{r,\theta} \dd B_\theta \right ) \dd r  = \littleo{\Delta t}.
\end{align*}
A similar method yields that the expression involving the $\dd \theta$ integral term in \cref{eqn:XXalt1} is $\littleo{\Delta t}$. 

$\recbullet$ Showing $\Ex_{t,x}^{t,T} \sum_{i=0}^{n-1} \YY_i^{(n)}$ converges to $0$ in $L^1(\Qro_{t,x})$ as $\Delta t \to 0$ follows in a similar manner to the case pertaining to $\XX_i^{(n)}$, thus we omit it.

$\recbullet$ Lastly, we show that $\Ex_{t,x}^{t,T} \sum_{i=0}^{n-1} \ZZ_i^{(n)} \to 0$ in $L^1(\Qro_{t,x})$. Focusing on the second term in $\ZZ_i^{(n)}$, note that we can rewrite
\begin{align*}
	\sigma(r, X_{t_i}, V_{t_{i+1}})  &= \sigma(r, X_{t_i}, V_{t_i}) + \int_{t_i}^{t_{i+1}} \sigma_y(r, X_{t_i}, V_{\theta}) \dd V_\theta + \bigO{\Delta t} \\
	&=  \sigma(r, X_{t_i}, V_{t_i}) + \int_{t_i}^{t_{i+1}} \beta(\theta, V_\theta) \sigma_y(r, X_{t_i}, V_\theta) \dd B_\theta + \bigO{\Delta t}.
\end{align*}
Thus the second term in $\ZZ_i^{(n)}$ can be reexpressed as 
\begin{align*}
	&u_x^{(n)}(t_{i+1}, X_{t_i}) \int_{t_i}^{t_{i+1}} \rho_r \sigma(r, X_{t_i}, V_{t_{i+1}}) \dd r \frac{\Delta B_i}{\Delta t} 
\\&= u_x^{(n)}(t_{i+1}, X_{t_i}) \left [\int_{t_i}^{t_{i+1}} \rho_r \sigma(r, X_{t_i}, V_{t_i}) \dd r + \int_{t_i}^{t_{i+1}} \rho_r \left (\int_{t_i}^{t_{i+1}}\beta(\theta, V_\theta) \sigma_y(r, X_{t_i}, V_\theta) \dd B_\theta \right ) \dd r \right ] \frac{\Delta B_i}{\Delta t} \\
&\quad+ o(\Delta t).
\end{align*}
Hence we can reexpress $\ZZ_i^{(n)}$ as
\begin{align}
	\ZZ_i^{(n)} =  \hat \ZZ_i^{(n)} + \bar \ZZ_i^{(n)} + \littleo{\Delta t}, \label{eqn:ZZalt1}
\end{align} 
where
\begin{align*}
	\hat \ZZ_i^{(n)} &:=  u_x^{(n)}(t_{i+1}, X_{t_i}) \left [ \int_{t_i}^{t_{i+1}} \rho_r \sigma(r, X_r, V_r) \dd B_r - \int_{t_i}^{t_{i+1}} \rho_r \sigma(r, X_{t_i}, V_{t_i}) \dd r   \frac{\Delta B_i}{\Delta t} \right ], \\
\bar \ZZ_i^{(n)} &:=  u_x^{(n)}(t_{i+1}, X_{t_i}) \Bigg [ \int_{t_i}^{t_{i+1}} \rho_r \beta(r, V_{t_i}) \sigma_y(r, X_{t_i}, V_{t_i}) \dd r \\&\quad - \int_{t_i}^{t_{i+1}} \rho_r \left (\int_{t_i}^{t_{i+1}}\beta(\theta, V_\theta) \sigma_y(r, X_{t_i}, V_\theta) \dd B_\theta \right ) \dd r \frac{\Delta B_i}{\Delta t} \Bigg].
\end{align*}
We can rewrite $\hat \ZZ_i^{(n)}$ and $\hat \ZZ_i^{(n)}$ by pulling the integrals out to the front:
\begin{align*}
	\hat \ZZ_i^{(n)} &= u_x^{(n)}(t_{i+1}, X_{t_i}) \int_{t_i}^{t_{i+1}} \frac{1}{\Delta t} \left ( \int_{t_i}^{t_{i+1}} \rho_r \sigma(r, X_r, V_r) - \rho_\theta \sigma(\theta , X_{t_i}, V_{t_i}) \dd \theta \right )\dd B_r, \\
	\bar \ZZ_i^{(n)} &=  u_x^{(n)}(t_{i+1}, X_{t_i}) \int_{t_i}^{t_{i+1}} \rho_r \left [ \int_{t_i}^{t_{i+1}}\left ( \frac{1}{\Delta B_i} \beta(r, V_{t_i}) \sigma_y(r, X_{t_i}, V_{t_i}) - \frac{\Delta B_i}{\Delta t} \beta(\theta, V_\theta) \sigma_y(r, X_{t_i}, V_\theta)  \right ) \dd B_\theta \right ] \dd r.
\end{align*}

Focusing on $\hat \ZZ_i^{(n)}$, we can rewrite the integrand as:
\begin{align*}
	\rho_r \sigma(r, X_r, V_r) - \rho_\theta \sigma(\theta , X_{t_i}, V_{t_i}) &= \left [ \rho_r \sigma(r, X_r, V_r) - \rho_{t_i} \sigma(t_i , X_{t_i}, V_{t_i}) \right ] - \left [\rho_\theta \sigma(\theta, X_{t_i}, V_{t_i}) - \rho_{t_i} \sigma(t_i , X_{t_i}, V_{t_i}) \right ] \\
	&= \int_{t_i}^r a_\nu \dd B_\nu + \int_{t_i}^r b_\nu \dd \hat B_\nu + \bigO{\Delta t},
\end{align*}
where the $\bigO{\Delta t}$ term contains the second-order terms from applying It\^o's formula on the preceding $r$ term (i.e., first term), as well as the $\theta$ term (i.e., second term). Both $a_\nu$ and $b_\nu$ are bounded, and their explicit forms are not important. Hence,
\begin{align*}
	\hat \ZZ_i^{(n)} &=  u_x^{(n)}(t_{i+1}, X_{t_i}) \int_{t_i}^{t_{i+1}} \frac{1}{\Delta t} \left ( \int_{t_i}^{t_{i+1}} \left [ \int_{t_i}^r a_\nu \dd B_\nu + \int_{t_i}^r b_\nu \dd \hat B_\nu \right ] \dd \theta \right )\dd B_r + \littleo{\Delta t} \\
	&=  u_x^{(n)}(t_{i+1}, X_{t_i}) \int_{t_i}^{t_{i+1}} \left ( \int_{t_i}^r a_\nu \dd B_\nu \right ) \dd B_r +  u_x^{(n)}(t_{i+1}, X_{t_i}) \int_{t_i}^{t_{i+1}} \left (\int_{t_i}^r b_\nu \dd \hat B_\nu \right )\dd B_r + \littleo{\Delta t}.
\end{align*}
The preceding term involving the $\dd \hat B$ It\^o integral will be zero after one applies $\Ex_{t,x}^{t,T}[\cdot]$ to it and then towers with $\Ex_{t,x}^{t,T} [ \cdot | X_{t_i}]$. Note that
\begin{align*}
	 \int_{t_i}^{t_{i+1}} \left ( \int_{t_i}^r a_\nu \dd B_\nu \right ) \dd B_r & =  \int_{t_i}^{t_{i+1}} \left ( \int_{t_i}^r (a_\nu - a_{t_i}) + a_{t_i} \dd B_\nu \right ) \dd B_r  \\
					&=  \int_{t_i}^{t_{i+1}} \left ( \int_{t_i}^r (a_\nu - a_{t_i}) \dd B_\nu \right ) \dd B_r  + \frac{1}{2} a_{t_i} \left (\Delta B_i^2 - \Delta t \right ). \\
\end{align*}
Hence we can bound $\Ex_{t,x}[\cdot]$ of the $a_\nu$ term like:
\begin{align*}
	&\Ex_{t,x} \left | u_x^{(n)}(t_{i+1}, X_{t_i}) \int_{t_i}^{t_{i+1}} \left ( \int_{t_i}^r a_\nu \dd B_\nu \right ) \dd B_r \right |\\
	&= \Ex_{t,x} \left | u_x^{(n)}(t_{i+1}, X_{t_i})  \left (  \int_{t_i}^{t_{i+1}} \left ( \int_{t_i}^r (a_\nu - a_{t_i}) \dd B_\nu \right ) \dd B_r  + \frac{1}{2} a_{t_i} \left (\Delta B_i^2 - \Delta t \right ) \right ) \right |\\
	&\leq \left ( \Ex_{t,x} \left [  u_x^{(n)}(t_{i+1}, X_{t_i})  \right ]^2 \right )^{1/2} \Bigg [ \left (\Ex_{t,x} \left [ \int_{t_i}^{t_{i+1}} \left ( \int_{t_i}^r (a_\nu - a_{t_i}) \dd B_\nu \right ) \dd B_r \right ]^2 \right )^{1/2} \\
	&\qquad+ \frac{1}{2} \left (\Ex_{t,x} \left [ a_{t_i} (\Delta B_i^2 - \Delta t) \right ]^2 \right )^{1/2} \Bigg]\\
	&= \left ( \Ex_{t,x} \left [  u_x^{(n)}(t_{i+1}, X_{t_i})  \right ]^2 \right )^{1/2} \left [ \left (\int_{t_i}^{t_{i+1}} \left ( \int_{t_i}^r \Ex_{t,x} [a_\nu - a_{t_i}]^2 \dd \nu \right ) \dd r \right )^{1/2} + \frac{1}{2} \left (\Ex_{t,x} \left [ a_{t_i} (\Delta B_i^2 - \Delta t) \right ]^2 \right )^{1/2} \right ].
\end{align*}
From the above calculations, and due to the regularity of $a$, it is now clear that
\begin{align*}
	 u_x^{(n)}(t_{i+1}, X_{t_i}) \int_{t_i}^{t_{i+1}} \left ( \int_{t_i}^r (a_\nu - a_{t_i}) \dd B_\nu \right ) \dd B_r  = \littleo{\Delta t}.
\end{align*}
Furthermore, as a consequence of the quadratic variation of Brownian motion,
\begin{align*}
	 u_x^{(n)}(t_{i+1}, X_{t_i})  a_{t_i} (\Delta B_i^2 - \Delta t) = \littleo{\Delta t}.
\end{align*}
The term $\bar \ZZ_i^{(n)}$ can be tackled in a similar manner to $\hat \ZZ_i^{(n)}$, albeit in a more tedious fashion. Thus we omit it.

In total, we have shown that the LHS of \cref{eqn:exofincrements2} converges to $0$ in $L^1(\Qro_{t,x})$ for all $R > 0$. However, we also have that the RHS of \cref{eqn:exofincrements2} converges to $\gamma_R \left ( \Ex_{t,x}^{t,T} [ \varphi(X_T)] - u(t, x) \right )$ weakly in $L^2(\reals \times \Omega)$, for all $R > 0$. Hence we can conclude that $u(t,x) = \Ex[\varphi(X_T) | X_t = x, \bar \FF_{t,T}^{V,B} ]$ for all $t \in (0, T]$ and $x \in \reals$, $\Qro$ a.s.

\qed

\subsection*{Proof of \Cref{thm:conditionalfeynmanwellposed}}
By \Cref{thm:spdeexistence}, there exists a unique $(\bar \FF_{t,T}^{V,B})_{\tinT}$-adapted solution to the SPDE \cref{eqn:spdewellposed} belonging to  $L^2(\vep, T; H^1(\reals)) \cap C([\vep, T]; L^2(\reals))$ for all $\vep > 0$, $\Qro$ a.s., which we will denote by $u(t,x)$. For simplicity, we will assume that $\varphi \in C_c^\infty(\reals)$; the general case would follow from a standard approximation argument. 

The idea is now classical, one considers a sequence of coefficients
\begin{align}
	\mu^{(m)}, \sigma^{(m)}, \alpha^{(m)}, \beta^{(m)}, \rho^{(m)}, \label{eqn:seqcoefs}
\end{align}
that satisfy the additional assumptions \cref{ass:extrareg1} and \cref{ass:extrareg2} from \Cref{prop:conditionalfeynmanwellposedprop}, are bounded uniformly by constants not depending on $m$, and which converge uniformly on compacts to the original coefficients $\mu, \sigma, \alpha, \beta, \rho$ respectively from the system \crefrange{eqn:system2X}{eqn:system2V}, where we reiterate that the latter only satisfy \Crefrange{ass:sde}{ass:ratio_density}. Denote by $\Qro_{t,x}^{(m)} \equiv \Qro^{(m)}(\cdot | X_t = x)$ the solution of the martingale problem associated with the system \crefrange{eqn:system2X}{eqn:system2V} with the new coefficients \cref{eqn:seqcoefs}. Denote the expectation under $\Qro_{t,x}^{(m)}(\cdot | X_t = x)$ by $\Ex^{(m)}_{t,x}$. It is well known that the sequence $\Qro^{(m)}_{t,x}$ converges weakly to $\Qro_{t,x}$, see for example \citep[][Theorem 11.1.4]{stroock1997multidimensional}. Then denote by $u^{(m)}(t, x)$ the solution to the SPDE \cref{eqn:spdewellposed} associated with the new coefficients \cref{eqn:seqcoefs}. By \Cref{prop:conditionalfeynmanwellposedprop} we have
\begin{align*}
	u^{(m)}(t,x) = \Ex^{(m)} \left [ \varphi(X_T) | \bar \FF_{t,T}^{V, B}, X_t = x \right ],
\end{align*}
for all $t \in (0, T]$ and $x \in \reals$, $\Qro^{(m)}$ a.s.

Let $A_R = \{\sup_{t \leq r \leq T} |V_r| \leq Rt^k \}$ so that \cref{eqn:gamma} can be written as $\gamma_R = \rind{A_R}$. Suppose $\xi$ is an arbitrary $\bar \FF_{t,T}^{V,B}$-measurable continuous random variable with $\xi = \xi \gamma_R$. That is, $\xi(A_R^c) = 0 $. In other words, $\xi$ vanishes outside of the event $A_R$. Then as of consequence of the definition of conditional expectation,
\begin{align}
	\Ex_{t,x} [ u^{(m)}(t,x) \xi ] = \Ex^{(m)}_{t,x} \left [ \varphi \left (X_T \right ) \xi \right ] \label{eqn:weaklim}
\end{align}
where we also note that the restriction of $\Qro^{(m)}$ to $\bar \FF_{t, T}^{V, B}$ does not depend on $m$.
Moreover, it is not hard to see that $\gamma_R u^{(m)}(t, \cdot) \to \gamma_R u(t, \cdot)$ weakly in $L^2(\reals \times \Omega)$ for all $t$ and $R > 0$. Since $\xi = \xi \gamma_R$, we can take limit on the LHS of \cref{eqn:weaklim}, as well as utilise the Portmanteau theorem (which is justified due to the regularity of $\varphi$), which yields
\begin{align*}
	\Ex_{t,x} [ u(t,x) \xi ] = \Ex_{t,x} \left [ \varphi \left (X_T \right ) \xi \right ],
\end{align*}
for all $t \in (0, T]$, $\dd x \times \dd \Qro$ a.e. The result then follows by definition of conditional expectation, where we recognise that the $\sigma$-algebra generated by the collection of preimages of $\xi$ for various $R > 0$ generates $\bar \FF_{t,T}^{V, B}$.
\qed

\section{Multivariable setting}
\label{sec:multivariablesetting}
\noindent

\noindent Our main results from \Cref{sec:mainresults} can be extended to the multivariable setting. Consider the multivariable diffusion $(X, V)$ taking values in $\reals^{N} \times \reals^{D}$ given by the (forward) system
\begin{align}
	\dd X_t &= \mu(t, X_t, V_t) \dd t + \tilde \sigma(t, X_t, V_t) \dd B_t + \hat \sigma(t, X_t, V_t) \dd \hat B_t, \label{eqn:systemmultiX} \\
	\dd V_t &= \alpha(t, V_t) \dd t + \beta(t, V_t) \dd B_t, \label{eqn:systemmultiV}
\end{align}
where $(B, \hat B)$ is a $\reals^{D} \times \reals^{N}$ valued Brownian motion and
\begin{itemize}
\item $\mu: [0, T] \times \reals^N \times \reals^D \to \reals^{N}$, $\tilde \sigma: [0, T] \times \reals^N \times \reals^D \to \reals^{N \times D}$, 
$\hat \sigma: [0, T] \times \reals^N \times \reals^D \to \reals^{N\times N}$ are each Borel measurable,
\item $\alpha: [0, T] \times \reals^D \to \reals^D$, $\beta: [0, T] \times \reals^D \to \reals^{D \times D}$ are each Borel measurable.
\end{itemize}
Moreover, let $a:= \tilde \sigma \tilde \sigma^\top + \hat \sigma \hat \sigma^\top$.
\begin{remark}
We recover the system \crefrange{eqn:system2X}{eqn:system2V} by choosing $N = D = 1$ as well as $\tilde \sigma = \rho \sigma$ and $\hat \sigma = \sqrt{1 - \rho^2} \sigma$ in the system \crefrange{eqn:systemmultiX}{eqn:systemmultiV}.
\end{remark}

Suppose $V_t$ possesses a density $p(t,y)$ w.r.t. Lebesgue measure. That is, $\Qro(V_t \in A) = \int_A p(t,y) \dd y$ for any Borel set $A$ in $\reals^D$. Similar to the univariate case, we define $\bar \FF_{t, T}^{V, B} = \FF_{t, T}^B \vee \sigma(V_t)$ and 
\begin{align*}
	\mrB_t^k = B_t^k - B_T^k - \int_t^T \frac{\sum_{l = 1}^D \partial_{y_l}(p(r, V_r) \beta_{l,k}(r, V_r))}{p(r, V_r)} \dd r, \quad k = 1, \dots, D.
\end{align*}

Consider the following (backward) SPDE:
\begin{align}
\begin{split}
	-\dd u (t, x) &= \left (\LL^x_t - \CC^x_t - \sum_{ k, l = 1}^D \frac{\partial_{y_l} (p(t,V_t) \beta_{l,k}(t,V_t))}{p(t,V_t)} \left (\BB_t^x\right )_k \right ) u(t,x)\dd t + \sum_{k = 1}^D \left (\BB^x_t \right )_k u(t,x) \bd \mrB^k_t, \label{eqn:spdewellposedmulti} \\
	 u(T,x) &= \varphi(x),
\end{split}
\end{align}
where we have the (stochastic) differential operators
\begin{align*}
	\LL^x_t &:= \frac{1}{2} \sum_{i,j = 1}^N a_{i,j}(t,x,V_t) \partial_{x_i x_j}^2 + \sum_{i = 1}^N \mu_i (t, x, V_t) \partial_{x_i}, \\
	\left ( \BB^x_t\right )_k &:=  \sum_{i =1}^N \tilde \sigma_{i, k} (t, x, V_t) \partial_{x_i}, \quad k = 1, \dots , D,\\
	\CC^x_t &:= \sum_{i =1}^N \sum_{p, q =1}^D \beta_{p, q} (t,V_t)  \left (\partial_{y_p} \tilde \sigma_{i, q} (t, x, V_t) \right) \partial_{x_i}.
\end{align*}

The following assumptions are the multivariable counterparts of \Crefrange{ass:sde}{ass:ratio_density}. However, we can no longer appeal to the Yamada-Watanabe condition for $V$ in \cref{eqn:systemmultiV} as we are in a higher dimensional framework. Instead we will resort to the usual It\^o style existence results. Note that below, $| \cdot |$ refers to the Euclidean norm whereas $\| \cdot \|$ refers to the Frobenius norm.
\footnote{For a $m \times n$ real valued matrix $A$, the Frobenius norm (or $L^{2, 2}$ norm) is $\| A \| := \left ( \sum_{i = 1}^m \sum_{j = 1}^n A_{i, j}^2 \right )^{1/2}$.} It should be clear that any analytical properties listed below are considered w.r.t. these norms. Typically $x$ and $y$ denote a point in $\reals^N$ and $\reals^D$ respectively, so that $(x, y)$ denotes a point in $\reals^{N+D}$.

\begin{assumptionm}
\label{ass:multi_sde}
\noindent
\begin{enumerate}[label = (mA\arabic*), ref = mA\arabic*]
\item $(x, y) \mapsto \mu(t, x, y)$, $(x, y) \mapsto \tilde \sigma(t, x, y)$ and $(x, y) \mapsto \hat \sigma(t, x, y)$ are locally Lipschitz continuous, uniformly in $t$. 
\item $y \mapsto \alpha(t, y)$ and $y \mapsto \beta(t, y)$ are locally Lipschitz continuous, uniformly in $t$. \label{ass:mA2}
\item $|\mu(t, x, y)| + \| \tilde \sigma(t, x, y) \|  + \| \hat \sigma(t, x, y) \| \leq C(1 + |(x, y)|)$, uniformly in $t$.
\item $|\alpha(t, y)| + \| \beta(t, y) \| \leq C(1 + |y|)$, uniformly in $t$. \label{ass:mA4}
\end{enumerate}
\end{assumptionm}

\begin{assumptionm}
\label{ass:multi_mrB}
\noindent
\begin{enumerate}[label = (mB\arabic*), ref = mB\arabic*]
\item The density of $V_0$,  $p_0(y) \equiv p(0,y)$ satisfies $\int_{\reals^D} \frac{p^2_0(y)}{1 + |y|^k} \dd y < \infty$ for some $k \in \naturals$.
\item  $\partial^2_{y_i y_j}(\beta \beta^\top)_{i,j}\in L^{\infty}([0, T] \times \reals^D ; \reals)$ for $i, j = 1, \dots, D$.
\end{enumerate}
\end{assumptionm}
By \Cref{thm:definingmrB}, $\mrB$ is a backward Brownian motion in $(\bar \FF_{t,T}^{V,B})_{\tinT}$.

\begin{assumptionm}
\label{ass:multi_spde}
\noindent
\begin{enumerate}[label = (mC\arabic*), ref = mC\arabic*]
\item  $\varphi \in C_c^1(\reals^N; \reals)$. 
\item $\mu \in L^\infty([0, T] \times \reals^N \times \reals^D ; \reals^D)$, $\tilde \sigma \in L^\infty([0, T] \times \reals^N \times \reals^D ; \reals^{N \times D})$, $\hat \sigma \in L^\infty([0, T] \times \reals^N \times \reals^D ; \reals^{D \times D})$ and $\alpha \in L^\infty([0, T] \times \reals^D ; \reals^D)$, $\beta \in L^\infty([0, T] \times \reals^D ; \reals^{D \times D})$.
\item $\partial_{x_i} \tilde \sigma_{i, j} \in L^{\infty}([0, T] \times \reals^N \times \reals^D ; \reals)$ and are continuous in $(x,y)$ on compacts of $[0,T] \times \reals^N \times \reals^D$, uniformly in $t$, $i = 1, \dots, N, j = 1, \dots, D$. 
\item $z^\top a z \geq C |z|^2 $ for some constant $C>0$, for every $z \in \reals^{N}$ uniformly in $(t,x,y)$.
\end{enumerate}
\end{assumptionm}

\begin{assumptionm}
\label{ass:multi_ratio_density}
Recall $p(r, y)$ is the density of $V_r$.
\begin{align*}
	\left |  \frac{\sum_{l = 1}^D \partial_{y_l}(p(r, y) \beta_{l,k}(r, y))}{p(r, y)}  \right | \leq C_k \left ( \frac{|y|^{p_1}}{r^{q_1}} + \frac{|y|^{p_2}}{r^{q_2}} \right ),
\end{align*}
where $p_i \geq 0, q_i \in \reals$ and $p_i = 0$ implies $q_i \leq 0$, for $i = 1, 2$.
\end{assumptionm}

In the univariate case, our main innovation in the proofs from \Cref{sec:proofs} came from handling the technicalities associated with conditioning on the $\sigma$-algebra $\bar \FF_{t, T}^{V, B}$ and subsequently utilising the Brownian motion $\mrB$ as the stochastic integrator. This technicality led us to enforce \Cref{ass:ratio_density} on the density of $V_r$ to ensure our results hold in the univariate case. It should not come as a surprise that \Cref{ass:multi_ratio_density} is the correct counterpart in the multivariable scenario.

The extension of our main results from \Cref{sec:mainresults} to the higher dimensional case is straightforward. Indeed, one simply follows the methods of the proofs in \Cref{sec:proofs} and changes the univariate objects to their multivariable ones. Hence, we state the following results without proof.

\begin{theorem}
\label{thm:spdeexistencemulti} 
There exists a unique solution $u(t,x)$ to the SPDE \cref{eqn:spdewellposedmulti}, adapted to $(\bar \FF_{t,T}^{V, B})_{\tinT}$. Moreover, $t \mapsto u(t,x)$ belongs to $L^2(\vep, T ; H^1(\reals^N)) \cap C([\vep, T]; L^2(\reals^N))$ for all $\vep > 0$, $\Qro$ a.s.
\end{theorem}

\begin{theorem}
\label{thm:conditionalfeynmanwellposedmulti} 
Let $u(t,x)$ be the unique $(\bar \FF_{t,T}^{V, B})_{\tinT}$-adapted solution to the SPDE \cref{eqn:spdewellposedmulti}. Then for all $t \in (0, T]$, $u(t,x)$ admits the representation
\begin{align*}
	u(t, x) = \Ex \big [ \varphi(X_T) | X_t = x, \bar \FF_{t,T}^{V,B}]
\end{align*}
$\dd x \times \dd \Qro$ a.e.
\end{theorem}

\begin{remark}
\label{remark:spdeinformalmulti}
As in the two-dimensional setting, an informal SPDE counterpart to the multivariable well-posed SPDE \cref{eqn:spdewellposedmulti} can be stated, namely 
\begin{align}
\begin{split}
	-\dd u (t, x) &= \left (\LL^x_t - \CC^x_t \right ) u(t,x)\dd t + \sum_{k = 1}^D \left (\BB^x_t \right )_k u(t,x) \bd B^k_t, \label{eqn:spdeinformalmulti} \\
	 u(T,x) &= \varphi(x).
\end{split}
\end{align}
\end{remark}
\section{Numerical analysis}
\label{sec:numerics}
\noindent
In this section, we develop a mixed Monte-Carlo PDE numerical method for the pricing of European put options by utilising our conditional Feynman-Kac formula (\Cref{thm:conditionalfeynmanwellposed}). Through our mixed Monte-Carlo PDE method, we will be able to achieve dimension and variance reduction as compared to a Full Monte-Carlo simulation or deterministic PDE numerical method by offloading the spot simulation onto a numerical PDE solver, and then handling the volatility process through Monte-Carlo simulation. Rather than utilising the well-posed SPDE \cref{eqn:spdewellposed} whose solution can be expressed as a suitable conditional expectation via our conditional Feynman-Kac formula, we will instead utilise the informal SPDE \cref{eqn:spdeinformal}. Briefly speaking, this is possible since time will be discretised, and thus there is no danger of any ill-posed stochastic integral arising. To further elaborate, first suppose we do decide to use the well-posed SPDE to develop our mixed Monte-Carlo PDE numerical method, and consider the following. We note that the coefficients in the well-posed SPDE \cref{eqn:spdewellposed} depend on $V_t$, thus we must first simulate $V$ from \cref{eqn:system2V}, and this itself requires simulation of the Brownian motion $B$. Then to numerically solve the well-posed SPDE \cref{eqn:spdewellposed} through finite difference we are required to simulate the backward Brownian motion $\mrB$. The crucial point is that $B$ and $\mrB$ are not the same, and in fact are related by \cref{eqn:mrB}. Lastly, by plugging in the increments of $\mrB$ into the well-posed SPDE \cref{eqn:spdewellposed} (after time discretisation), we then end up with the time discretised version of the informal SPDE \cref{eqn:spdeinformal}. Hence, it is simpler, more intuitive and equivalent to consider the informal SPDE for numerical purposes. For this reason, in this section, we only refer to the informal SPDE, and here on in will simply refer to it as the SPDE.

For convenience, we can formulate an informal version of the conditional Feynman-Kac formula in two dimensions (\Cref{thm:conditionalfeynmanwellposed}). Let $\bar u(t, x) = \Ex \big [\varphi(X_T) |X_t = x, \bar \FF_{t,T}^{V,B} \big ]$ where we refer to objects defined from \Cref{sec:preliminaries}. Then $\bar u(t,x)$ solves the informal SPDE
\begin{align}
\begin{split}
	-\dd u(t, x) &= \left (\LL^x_t -  \CC^x_t \right ) u(t,x)\dd t + \BB^x_t u(t,x) \bd B_t, \\
	u(T,x) &= \varphi(x), \label{eqn:spdenumerics}
\end{split}
\end{align}
where the (stochastic) differential operators $\LL_t^x, \BB_t^x, \CC_t^x$ are given in \crefrange{eqn:stochdiffoperatorL}{eqn:stochdiffoperatorC}. Denoting by $H$ the price of a European derivative which pays $\varphi(X_T)$ at time $T$, then $H_t = e^{-\int_t^T \intr_r \dd r} \Ex \big [\bar u(t, X_t)| X_t, V_t \big ]$, where $(\intr_t)_{\tinT}$ is the deterministic interest rate. Moreover, by following the strategy outlined in \Cref{remark:time0issue}, we are able to legitimately develop a mixed Monte-Carlo PDE method for pricing at time $t = 0$. Lastly, we remark that the methodology developed and examples considered in this section can be generalised to the higher dimensional framework by appealing to \Cref{remark:spdeinformalmulti}.
\subsection{Numerical SPDE schemes}
Consider a time grid $\{0 = t_0 < t_1 < \cdots < t_n = T\}$ and space grid $\{x_{\text{min}} <  \cdots < x_{\text{max}}\}$, with $\Delta t := t_{i+1} - t_i$ and $\Delta x := x_{j+1} - x_j$. Let $u^{i,j} \equiv u(t_i, x_j)$. Define the following:
\begin{align*}
	\LL_i^j [u]  &:= \frac{1}{2} (\sigma^{i,j})^2\left ( \frac{u^{i,j+1} - 2u^{i,j} + u^{i, j-1}}{(\Delta x)^2}\right ) + \mu^{i,j} \left (\frac{u^{i, j+1} - u^{i,j}}{\Delta x } \right ), \\
	\BB_i^j[u] &:=  \rho_i \sigma^{i,j}\left (\frac{u^{i, j+1} - u^{i,j}}{\Delta x } \right ),\\
	\CC^j_i[u] &:=  \rho_i \beta^i \sigma_y^{i,j}  \left ( \frac{u^{i, j+1} - u^{i,j}}{\Delta x}\right ).
\end{align*}
Here it is clear that for example, $f^{i,j} \equiv f(t_i, x_j, V_{t_i})$. The SPDE \cref{eqn:spdenumerics} yields the following numerical schemes: 
\begin{itemize}
\item Semi-implicit: 
\begin{align}
	u^{i,j} = u^{i+1,j}  + (\LL_{i}^j - \CC_{i}^j)[u] \Delta t +\BB_{i+1}^j [u] \Delta B_i, \quad u^{n,j} = \varphi(x_j). \label{eqn:schemesemi}
\end{align}
\item Crank-Nicolson:
\begin{align}
	u^{i,j} = u^{i+1,j}  + \frac{1}{2} \big ((\LL_i^j + \LL_{i+1}^j)[u] - (\CC_i^j + \CC_{i+1}^j)[u]\big ) \Delta t +\BB_{i+1}^j[u] \Delta B_i, \quad u^{n,j} = \varphi(x_j). \label{eqn:schemecn}
\end{align}
\end{itemize}
Note that one must take the right end point when discretising the backward stochastic integral.

\begin{lemma}[Mixed Monte-Carlo PDE method]
\label{lemma:mixedMCPDE}
Let $x$ be the initial point of $X$ and suppose it corresponds to the space point $x_{\hat m}$ for some $\hat m \in \integers$. A mixed Monte-Carlo PDE method to simulate $H_0$ is the following:
\begin{enumerate}
\item Simulate a path of $B$ and $V$  to obtain the observations $B_1 \dots, B_n$ and $V_1, \dots, V_n$.
\item For these given paths, numerically solve the SPDE to obtain the value $u^{0,\hat m}$, which is an observation of $u(0, x)$. 
\item Repeat steps (1) and (2) $M$ times to obtain observations $(u^{0,\hat m,k})_{1\leq k \leq M}$, where $u^{0,\hat m,k}$ denotes the $k$-th observation.
\item $H_0 = e^{-\int_0^T \intr_r \dd r}\,\Ex \left [\bar u(0, x) \right ] \approx e^{-\int_0^T \intr_r \dd r} \frac{1}{M} \sum_{k=1}^M u^{0,\hat m,k}$.
\end{enumerate}
\end{lemma}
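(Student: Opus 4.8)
The plan is to assemble the lemma from three ingredients already in hand: the exact representation $H_0 = e^{-\int_0^T \intr_r \dd r}\,\Ex[\bar u(0,S_0)]$ established in \Cref{sec:preliminaries}, the informal conditional Feynman-Kac formula of \Cref{prop:conditionalfeynmaninformal} identifying $\bar u(t,x)$ as the solution of the SPDE \cref{eqn:spdenumerics}, and a law-of-large-numbers argument for the outer average. First I would observe that, conditionally on a realisation of the path of $(B,V)$ --- equivalently, on a realisation of the backward filtration $(\bar \FF_{t,T}^{V,B})_{\tinT}$ --- the SPDE \cref{eqn:spdenumerics} becomes a linear parabolic PDE whose coefficients and first-order ``noise'' term are frozen along the sampled path, with the increments $\Delta B_i$ playing the role of the backward-differencing stochastic integral. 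Step (1) of the algorithm produces such a realisation, and step (2) solves the finite-difference discretisation \cref{eqn:schemesemi} (or \cref{eqn:schemecn}) of this path-wise PDE; granting convergence of the scheme, the output $u^{0,\hat m,k}$ is therefore an approximate draw of the random variable $\bar u(0,S_0)$ along the $k$-th sampled path.

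Next, since the $M$ sampled paths of $(B,V)$ in step (3) are independent, the quantities $(u^{0,\hat m,k})_{1\le k\le M}$ are i.i.d.\ approximate draws of $\bar u(0,S_0)$, and the strong law of large numbers gives $\tfrac1M\sum_{k=1}^M u^{0,\hat m,k}\to\Ex[\bar u(0,S_0)]$ as $M\to\infty$ and the space/time grids are refined. Multiplying by the deterministic discount factor $e^{-\int_0^T \intr_r \dd r}$ then yields step (4). For a quantitative statement one would split the total error into a Monte-Carlo component of order $M^{-1/2}$ (via the central limit theorem, assuming $\bar u(0,S_0)\in L^2(\Qro)$) and a deterministic bias component coming from the finite-difference discretisation of the path-wise PDE.

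The main obstacle is precisely this bias component: showing that the schemes \cref{eqn:schemesemi}--\cref{eqn:schemecn} are consistent and stable, hence convergent, as $\Delta t,\Delta x\downarrow 0$ for ($\Qro$-almost) every realisation of $(B,V)$. This is delicate because (i) the coefficients $\mu,\sigma,\alpha,\beta$ are assumed only measurable with just enough regularity for a non-exploding strong solution, which is weaker than a clean scheme analysis would like; (ii) the $\BB_{i+1}^j[u]\,\Delta B_i$ term is treated explicitly (right endpoint), so one must control a stochastic CFL-type condition linking $\Delta t$, $\Delta x$ and the typical size $\Delta B_i\sim\sqrt{\Delta t}$, and argue that the associated stability constraint holds with high probability; and (iii) one must check that the discrete quadratic-covariation effect implicit in the backward differencing reproduces, in the limit, the $\CC_t^x$ correction appearing in \cref{eqn:spdenumerics}, consistently with the heuristics behind \Cref{prop:conditionalfeynmaninformal}. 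In the spirit of this note I would be content with the formal statement, remarking that once a convergent scheme is available the lemma follows immediately from the two steps above, and deferring precise stability conditions and convergence rates to the forthcoming work.
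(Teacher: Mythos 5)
The paper states this lemma without a proof, treating it as an immediate consequence of the representation $H_0 = e^{-\int_0^T \intr_r \dd r}\,\Ex[\bar u(0,S_0)]$ from \Cref{sec:preliminaries} and the informal conditional Feynman-Kac formula \Cref{prop:conditionalfeynmaninformal}. Your proposal correctly reconstructs exactly that implicit chain of reasoning --- representation, pathwise reduction of the SPDE to a random PDE, and law of large numbers --- and your cautionary remarks about the deferred issues (stability/CFL under the explicit $\BB_{i+1}^j[u]\,\Delta B_i$ term, recovery of the $\CC_t^x$ correction in the discrete limit, low coefficient regularity) are precisely the points the paper itself chooses to postpone, so the proposal is correct and fully in the spirit of the text.
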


\subsection{Numerical implementation}
We consider pricing a European put option within the Inverse-Gamma model with constant parameters, see \citep{langrene2016switching}:
\begin{align}
	\dd S_t &= \intr S_t \dd t + S_t V_t \dd W_t , \quad S_0, \label{eqn:systemspotX} \\
	\dd V_t &= \kappa(\theta - V_t) \dd t +  \lambda  V_t \dd B_t, \quad V_0 = v_0, \label{eqn:systemspotV} \\
	\dd \langle W, B \rangle_t &= \rho \dd t. \nonumber
\end{align}
For simplicity we assume that the parameters $\kappa, \theta$ and $\lambda$ are strictly positive, so that the process $V$ is strictly positive, see \citep[][eq. 0.2]{zhao2009inhomogeneous}.
Let $X_t = \ln(S_t/K)$, where $K$ is the strike of a European put option on $S$. We can rewrite the system \crefrange{eqn:systemspotX}{eqn:systemspotV} as
\begin{align}
	\dd X_t &= \left (\intr -\frac{1}{2} V_t^2 \right) \dd t + V_t \dd W_t , \quad X_0 = \ln(S_0/K),  \label{eqn:systemlogspotX} \\
	\dd V_t &= \kappa (\theta - V_t) \dd t +  \lambda  V_t \dd B_t, \quad V_0 = v_0,  \label{eqn:systemlogspotV} \\
	\dd \langle W, B \rangle_t &= \rho \dd t. \nonumber
\end{align}
For numerical purposes, we will instead consider the system \crefrange{eqn:systemlogspotX}{eqn:systemlogspotV}.

Let $\varphi^P(x) = K(1-e^x)_+$ and $u^P(t,x) = \Ex \big[\varphi^P(X_T) |X_t = x, \bar \FF_{t,T}^{V,B} \big ]$. Then $u^P$ solves the SPDE \cref{eqn:spdenumerics} with terminal condition $\varphi^P$, where 
\begin{align*}
	\mu(t,x,V_t) &= \intr - \frac{1}{2} V_t^2, 		&  \sigma(t,x,V_t) &= V_t, 	& \alpha(t,V_t) &= \kappa (\theta - V_t),
	 & \beta(t, V_t) &= \lambda V_t.
\end{align*}
Thus, the time $t$ price of a put option on $S$ is given by $H^P_t := e^{-\intr (T-t)} \Ex [ u^P(t,X_t) | X_t, V_t]$. Moreover, it is straightforward to see that the right and left boundary conditions of the SPDE for $u^P$ are
\begin{align*}
	\lim_{x \to \infty} u^P(t,x) &= 0, \\
	\lim_{x \to - \infty} u^P(t,x) &= K,
\end{align*}
respectively.

\begin{remark} 
We briefly comment on the how the system \crefrange{eqn:systemlogspotX}{eqn:systemlogspotV} and put option payoff $\varphi^P$ handles \Crefrange{ass:sde}{ass:ratio_density}. First note that the system \crefrange{eqn:systemlogspotX}{eqn:systemlogspotV} possesses a pathwise unique strong solution, as \cref{eqn:systemlogspotV} satisfies \Cref{ass:sde} and \cref{eqn:systemlogspotX} is really just a formula for $X$ in terms of $V$. Moreover, $V_0$ is degenerate and $\beta(t, y) = \lambda y$, and thus \Cref{ass:mrB} is satisfied. More importantly, the system \crefrange{eqn:systemlogspotX}{eqn:systemlogspotV} does not seem to satisfy all the criteria in \Cref{ass:spde}. However, \Cref{ass:spde} is really stronger than what is required, and relaxations can be made provided that one includes various approximation and truncation procedures in the relevant proofs, not dissimilar to the case of deterministic PDEs. However, in \Cref{sec:proofs} we have evidently chosen not to prove our results in such generality, so as to keep the (already quite technical) proofs as simple as possible, and to ensure that the main ideas are not lost. 
For example, Assumptions \cref{ass:C1} and \cref{ass:C2} can clearly be circumvented through standard localisation arguments. Assumption \cref{ass:C3} is in fact satisfied by the system \crefrange{eqn:systemlogspotX}{eqn:systemlogspotV}. Lastly, due to the linear structure of the SDE \cref{eqn:systemlogspotV}, an explicit form for the pathwise unique strong solution of it exists \citep[][eq. 0.2]{zhao2009inhomogeneous}, and from this it is straightforward to deduce that the solution remains strictly positive. However, it is not lower bounded by a strictly positive constant. Despite this, the uniform ellipticity condition \cref{ass:C4} can be circumvented by replacing the SDE for $V$ in \cref{eqn:systemlogspotV} with 
\begin{align*}
	\dd \bar V_t = \kappa (\theta - (\bar V_t - \vep))\dd t + \lambda (\bar V_t - \vep)\dd B_t, \quad \bar V_0 = v_0,
\end{align*}
for some $\vep \leq v_0$, and thus one obtains the lower bound $\bar V_t \geq \vep$. By doing so we satisfy the uniform ellipticity condition \cref{ass:C4} as $\sigma^2(t, x, \bar V_t) \geq \vep^2$. Moreover, adding in this artificial lower bound will not change numerical experiments when $\vep$ is close to $v_0$. Finally, we are unfortunately unable to verify if \cref{eqn:systemlogspotV} satisfies \Cref{ass:ratio_density}, as this would require stringent quantitative results on the density of $V_r$ and its derivative. It is actually possible to find an explicit expression for the density of $V_r$, see \citep[][Theorem 2.8]{zhao2009inhomogeneous}, however this representation is rather complicated and difficult to work with. Despite this, we conjecture that \Cref{ass:ratio_density} holds for our example, and the validity of the numerical implementation is evidenced by our results comparing the mixed Monte-Carlo PDE method with the other two Monte-Carlo methods below.
\end{remark}

We will compare our mixed Monte-Carlo PDE method with the usual Full (two-dimensional) Monte-Carlo method by computing implied volatility for a 6M ATM European put option, and then investigating the accuracy and speed by varying the number of paths and time steps for both methods. As the benchmark for comparison, we will utilise the so-called Mixing Solution relationship, see \citep{das2022closed}. This relationship states that European put/call option prices can be expressed as an expectation of a functional of the volatility/variance process, this functional being essentially a Black-Scholes formula. We will state the result without proof, as it is a clear adaptation of the derivation for the Black-Scholes formula.
\begin{lemma}[Mixing Solution]
Let $\NN(x) = \int_{-\infty}^x  \frac{1}{\sqrt{2 \pi}} e^{-y^2/2} \dd y$ denote the standard normal distribution function. Then
\begin{align*}
	H_0^P &= \Ex \left [\Ex \left [ e^{-\intr T } (K  - S_T)_+ | \FF_T^B \right ]\right ] \\
	&=\Ex \left [\text{Put}_{\text{BS}}\left (S_0 \xi_T, (1 - \rho^2)  \int_0^T V^2_r \dd r \right) \right ],
\end{align*}
where 
\begin{align*}
		\xi_T  &= \exp \left ( \rho \int_0^T V_r \dd B_r - \frac{\rho^2}{2} \int_0^T  V^2_r \dd r\right ),
\end{align*}
and
\begin{align*}
	\text{Put}_{\text{BS}}(x,y) &:= K e^{-\intr T} \NN(- d_-) - x \NN(- d_+), \\
	d_\pm(x,y) := d_{\pm} &:= \frac{\ln(x/K) + \intr T}{\sqrt{y}} \pm \frac{1}{2} \sqrt{y}.
\end{align*}
\end{lemma}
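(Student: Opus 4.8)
The plan is to derive the first equality from the tower property of conditional expectation and then to evaluate the inner conditional expectation explicitly, exploiting the fact that, conditionally on the whole path of $B$, the terminal price $S_T$ is log-normal. Since the put payoff $(K-S_T)_+$ is bounded by $K$, every expectation below is finite, so writing $R := \int_0^T \intr_r\,\dd r$ the (undiscounted) price $H_0^P = e^{-R}\,\Ex[(K-S_T)_+]$ immediately gives
\[
	H_0^P = \Ex\big[\,\Ex\big(e^{-R}(K - S_T)_+ \,\big|\, \FF_T^B\big)\big],
\]
which is the first displayed line. Here I would note that $\FF_T^B = \bar\FF_{0,T}^B$ also determines the entire path of $V$, since $V$ is the pathwise unique strong solution of an SDE driven by $B$ alone; in particular $\int_0^T V_r^2\,\dd r$ and $\int_0^T \rho_r V_r\,\dd B_r$ are $\FF_T^B$-measurable.

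Next I would write $S_T$ in closed form. Decomposing $W$ as in \cref{eqn:system2} via $\dd W_r = \rho_r\,\dd B_r + \varrho_r\,\dd\hat B_r$ with $\hat B$ independent of $B$, the linear SDE $\dd S_t = \intr_t S_t\,\dd t + S_t V_t\,\dd W_t$ from \cref{eqn:systemspot} integrates to
\[
	S_T = S_0\exp\!\Big(R - \tfrac12\int_0^T V_r^2\,\dd r + \int_0^T \rho_r V_r\,\dd B_r + \int_0^T \varrho_r V_r\,\dd\hat B_r\Big).
\]
Splitting $\int_0^T V_r^2\,\dd r$ using $\rho_r^2 + \varrho_r^2 = 1$ and collecting the $\FF_T^B$-measurable factors into $\xi_T = \exp\!\big(\int_0^T \rho_r V_r\,\dd B_r - \tfrac12\int_0^T \rho_r^2 V_r^2\,\dd r\big)$, this becomes
\[
	S_T = S_0\,\xi_T\,\exp\!\Big(R - \tfrac12\int_0^T \varrho_r^2 V_r^2\,\dd r + \int_0^T \varrho_r V_r\,\dd\hat B_r\Big).
\]

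The crux is then to show that, conditionally on $\FF_T^B$, the stochastic integral $\int_0^T \varrho_r V_r\,\dd\hat B_r$ is Gaussian with mean $0$ and variance $y := \int_0^T \varrho_r^2 V_r^2\,\dd r = \int_0^T (1-\rho_r^2)V_r^2\,\dd r$. Because $\hat B$ is independent of $(B,V)$, it remains a Brownian motion in the enlarged filtration $(\FF_T^B \vee \FF_t^{\hat B})_{\tinT}$, in which $\varrho V$ is an adapted integrand that is "frozen" once $\FF_T^B$ is given; computing the conditional characteristic function then yields $\Ex[\exp(\mathrm{i}\theta\int_0^T \varrho_r V_r\,\dd\hat B_r)\mid \FF_T^B] = \exp(-\theta^2 y/2)$ (equivalently, approximate $\varrho V$ by $\FF_T^B$-measurable simple integrands and pass to the limit). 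Consequently, conditionally on $\FF_T^B$, $S_T$ has exactly the law of a Black--Scholes terminal price with spot $S_0\xi_T$, deterministic short rate $(\intr_r)$, and total integrated variance $y$, so the inner conditional expectation equals $\text{Put}_{\text{BS}}(S_0\xi_T, y)$ by the standard Black--Scholes put integral $e^{-R}\Ex[(K - xe^{R + Z - y/2})_+] = Ke^{-R}\NN(-d_-) - x\NN(-d_+)$ for $Z\sim\NN(0,y)$, with $d_\pm$ as defined. Taking the outer expectation then gives the second displayed line.

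The only genuinely delicate point is this conditional Gaussianity of $\int_0^T \varrho_r V_r\,\dd\hat B_r$ given the full path of $B$; everything else is either the tower property or the routine Black--Scholes put computation, which the paper already flags as a clear adaptation of the Black--Scholes derivation. I would handle the delicate step via the independence / enlargement-of-filtration argument above, taking care that $\int_0^T V_r^2\,\dd r < \infty$ almost surely (true since $V$ has continuous paths) so that $\xi_T$ and all the stochastic integrals in question are well defined.
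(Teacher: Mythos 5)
Your proof is correct and supplies exactly the argument the paper omits: the authors state this lemma without proof, remarking only that it ``is a clear adaptation of the derivation for the Black--Scholes formula,'' and your derivation is precisely that adaptation (the classical Hull--White / Romano--Touzi mixing argument). The chain tower property $\to$ explicit log-normal factorisation $S_T = S_0\,\xi_T\,\exp\bigl(R - \tfrac12\int_0^T\varrho_r^2 V_r^2\,\dd r + \int_0^T\varrho_r V_r\,\dd\hat B_r\bigr)$ $\to$ conditional Gaussianity of $\int_0^T\varrho_r V_r\,\dd\hat B_r$ given $\FF_T^B$ by independence of $\hat B$ from $(B,V)$ $\to$ Black--Scholes put integral is complete and correct, including the identification $d_\pm$, and you rightly flag both the $\FF_T^B$-measurability of $V$ (strong solution driven by $B$ alone) and the finiteness of $\int_0^T V_r^2\,\dd r$ as the points that need to be checked.
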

The advantage of utilising the Mixing Solution relationship numerically is that it requires only a one-dimensional Monte-Carlo simulation, and hence is superior in terms of efficiency than the Full Monte-Carlo method. Moreover, it converges faster, which is a simple consequence of the law of total variance. Of course, the Mixing Solution relationship only works for European options, and only for models where the spot satisfies an SDE of the form \cref{eqn:systemspotX}. The method of numerically pricing options via the Mixing Solution will be called the Monte-Carlo Mixing Solution method. 

The (constant) parameters utilised in all our numerical experiments are given in the following table:
\begin{center}
\begin{tabular}{lllll l llll} 
\toprule
$S_0$ & \ $V_0$ & $T$ & $K$ & \ $\intr$ && $\kappa$ & $\theta$ & $\lambda$ & \; \ $\rho$ \\
\midrule
$100$ & \ $20\%$ &  6M  & ATM & \ 1\% && $5.00$ & $18\%$ & $0.90$ & $-0.35$  \\
 \bottomrule \\
\end{tabular}
\end{center}

For the mixed Monte-Carlo PDE method, to numerically solve the SPDE we utilise the Crank-Nicolson scheme \cref{eqn:schemecn} with the following space parameters, which will remain fixed throughout all our experiments:

\begin{center}
\begin{tabular}{cccc} 
\toprule
$x_0$ & \ $x_{\text{min}}$ & \ $x_{\text{max}}$ & \#Space points   \\
\midrule
$\ln(S_0/K)$  & \ $x_0 - 4V_0 \sqrt{T}$ & $x_0 + 4V_0 \sqrt{T}$ & $250$ \\
 \bottomrule \\
\end{tabular}
\end{center}

The benchmark will be given via the Monte-Carlo Mixing Solution method, where we utilise 1,000,000 paths, with 24 time steps per day, where a year is comprised of 253 trading days. 

\begin{remark}
The python code utilised for all our numerical experiments can be found on GitHub \citep{das2023}. In particular, what is provided are:
\begin{itemize}
\item Routines which compute European put/call option prices via the Monte-Carlo Mixing Solution method, Full Monte-Carlo method and our mixed Monte-Carlo PDE method.
\item A routine which compares the runtimes and errors in the aforementioned methods.
\end{itemize}
\end{remark}

\begin{center}
\begin{figure}[H]
\includegraphics[width=12cm]{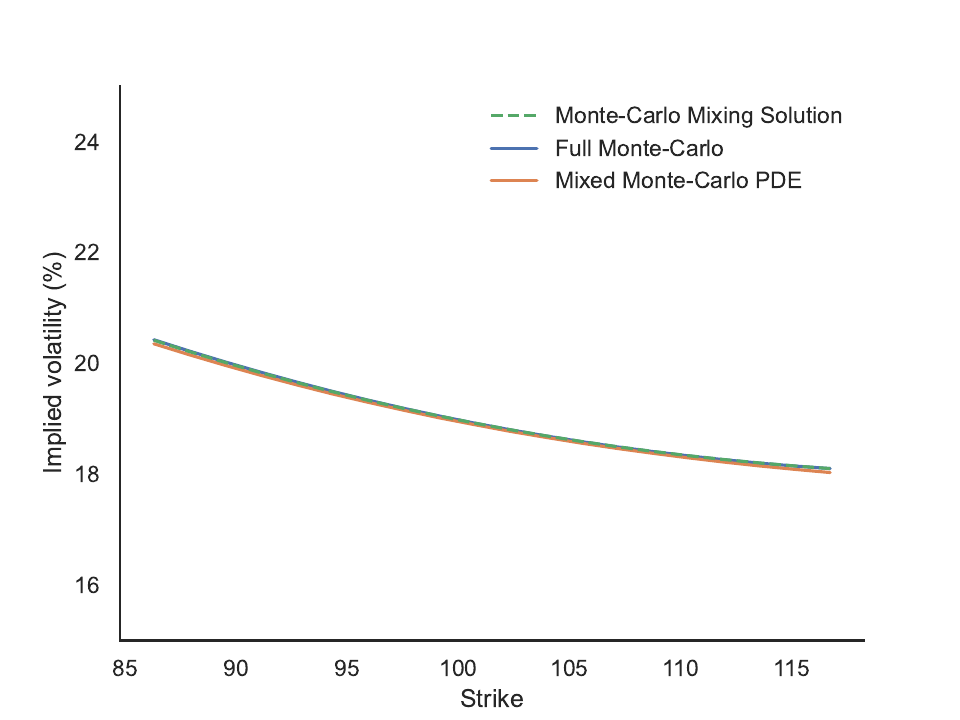}
\caption{The implied volatility curve in the Inverse-Gamma model. The number of Monte-Carlo paths for the Monte-Carlo Mixing Solution, Full Monte-Carlo and mixed Monte-Carlo PDE methods are $10 \times 10^5, 15 \times 10^5, 10 \times 10^4$ respectively, whereas the number of time steps are $24, 48$ and $1$ per day respectively.}
\label{fig:impvol}
\end{figure}
\end{center}
\Cref{fig:impvol} shows a plot of the implied volatility curve obtained from all three methods in the Inverse-Gamma model with the aforementioned parameters. One can see qualitatively that the mixed Monte-Carlo PDE method does indeed reproduce the implied volatility curve well. More detailed and quantitative numerical results are provided in \Cref{appen:numericalresults}.

One will note that for the two methods, there is ostensibly a mismatch between the number of time-steps per day and paths chosen in our numerical experiments in \Cref{table:IGaimpvolsSP,table:IGaimpvolsFullMC}. However, this is not necessarily the case. First, it does not seem appropriate to directly compare the number of time-steps utilised by these two methods, since the mixed Monte-Carlo PDE method requires a time discretisation of $V$ as well as the SPDE, however the Full Monte-Carlo method requires a time discretisation of both $V$ and $X$. Secondly, the apparent mismatch between the number of paths considered for the two methods can be easily clarified as well. Via properties of conditional expectation, one can show that given a number of paths, the Monte-Carlo standard error for the mixed Monte-Carlo PDE method is significantly less than that of the Full Monte-Carlo method. Intuitively this makes sense; simulation of $X$ usually contributes the most to the Monte-Carlo variance, however in our mixed Monte-Carlo PDE method we bypass simulation of $X$ by offloading it to the PDE component. In fact this highlights a substantial advantage of our mixed Monte-Carlo PDE method; bluntly speaking the PDE component does the hard work by handling $X$, whereas the Monte-Carlo component does the easier work by tackling $V$.

At first glance it may seem that the run times of the mixed Monte-Carlo PDE method pale in comparison to the Full Monte-Carlo method. However these are not at all comparable, as another significant advantage of the mixed Monte-Carlo PDE method is that as it is a PDE method, we obtain the price of the put option for various $S_0$ values (250 values in this case!), whereas the Full Monte-Carlo method only obtains it for a single value.

For the mixed Monte-Carlo PDE method, we have considered a special case where we utilise 1,000,000 paths for each choice of \#Steps/day. This is in an attempt to reduce the Monte-Carlo standard error sufficiently low so that it is negligible compared to the time and space discretisation error, thereby giving us a better idea of what the combined time and space discretisation errors solely are. For the Full Monte-Carlo method, we have proceeded in a similar manner, where we have considered a case with 10,000,000 paths for each choice of \#Steps/day.

As mentioned above, it is difficult to compare the errors between the two methods as their number of time-steps per day and paths do not have a direct correspondance. However, we have selected them as best as we believe possible in order to draw a fair comparison. The Full Monte-Carlo errors in \Cref{table:IGaimpvolsFullMC} are standard and require no further investigation. For the mixed Monte-Carlo PDE method results in \Cref{table:IGaimpvolsSP}, the absolute errors and standard errors are at most approximately 10 basis points, which is more than sufficient in application. One thing to note is that it seems to have an unpredictable error for \#Steps/day = 0.5, meaning that the absolute error is not decreasing very monotonically as the number of paths increase. However, it starts to settle down for \#Steps/day = 1, 2. It seems logical to attribute this consistency to the PDE solver being sufficiently accurate on these finer time grids.

\section{Conclusion}
\label{sec:conclusion}
\noindent 
In this article we have proved a conditional Feynman-Kac formula which arises in the context of mathematical finance, and proved under certain assumptions that the existence and uniqueness of the associated SPDE is valid. These results are similar to results obtained in Section 6 of \citep{pardoux1982equations}, however in our case, non-trivialities arise due to the backward Brownian motion and backward filtration that must be considered, namely $\mrB$ and $(\bar \FF_{t,T}^{V,B})_{\tinT}$. Under additional assumptions on the speed of growth of the density of the auxiliary process $V$, we have shown that Pardoux's results can be adapted to the setting considered in this article. The purpose of developing this conditional Feynman-Kac formula is to utilise it to solve problems in mathematical finance. Indeed, we demonstrate its application in the simple setting of pricing a European put option in the Inverse-Gamma model. The conditional Feynman-Kac formula can be applied in other settings in mathematical finance, for example, mixing Least Square Monte-Carlo methods with numerical PDE methods, which will be the focus of forthcoming articles. 

%
\subsection*{Funding}
K. Das and I. Guo have been supported by the Australian Research Council (Grant DP220103106). I. Guo was also partially supported by CSIRO Data61 Risklab. During this project, the Centre for Quantitative Finance and Investment Strategies has been supported by BNP Paribas.

\subsection*{Acknowledgements}
The authors would like to thank two anonymous referees for their valuable comments and insights.

\renewcommand{\bibname}{References}
\bibliographystyle{plainnat}
\bibliography{refs}



\appendix
\crefalias{section}{appendix}
\crefname{appendix}{Appendix}{Appendices}
\Crefname{appendix}{Appendix}{Appendices}
\section{Some content on backward stochastic calculus}
\label{appen:backwarddefns}
\noindent
In this appendix, we provide the definitions of the backward versions of common objects and concepts from stochastic calculus. These definitions are straightforward counterparts to their forward versions. For this reason, this content has sometimes been dubbed backward stochastic calculus. However, we should stress that backward stochastic calculus should not be confused with the theory of backward stochastic differential equations developed by Pardoux and Peng, the latter being quite prevalent in the current stochastic analysis literature.
\begin{definition}[Backward filtration]
Let $(\GG_{t,T})_{\tinT}$ be a decreasing collection of $\sigma$-algebras. Then $(\GG_{t,T})_{\tinT}$ is called a backward filtration. We assume all backward filtrations considered satisfy the usual conditions, which for backward filtrations are: left continuity, i.e., $\GG_{t, T} = \bigcap_{\vep > 0} \GG_{t - \vep, T}$ for all $\tinT$, and also that $\GG_{T,T}$ is augmented by null sets.
\end{definition}

\begin{definition}[Backward martingale]
\label{defn:backwardsmg}
Consider a process $M$ as well as a backward filtration $(\GG_{t,T})_{\tinT}$. Suppose $M$ satisfies the following.
\begin{enumerate}[label = (\roman*)]
\item $M$ is adapted to the backward filtration $(\GG_{t,T})_{\tinT}$.
\item $\Ex|M_t| < \infty $ for all $t \in [0,T]$. 
\item $\Ex[M_s | \GG_{t,T}] = M_t$ for $s < t$.
\end{enumerate}
Then $M$ is called a backward martingale w.r.t. the backward filtration $(\GG_{t,T})_{\tinT}$. 
\end{definition}

\begin{definition}[Backwards stopping time]
\label{defn:backwardlmg}
Consider a backward filtration $(\GG_{t,T})_{\tinT}$. The random variable $\tau: \Omega \to \reals$ is called a backward stopping time if the events $\{\tau \geq t \} \in \GG_{t,T}$ for each $t$. 
\end{definition}

\begin{definition}[Backward local-martingale]
\label{defn:backwardlmg}
Consider a process $M$ which is adapted to a backward filtration $(\GG_{t,T})_{\tinT}$. Let $(\tau_n)_n$ be a sequence of backward stopping times with respect to $(\GG_{t,T})_{\tinT}$ such that 
\begin{enumerate}[label = (\roman*), ref = \roman*]
\item $\tau_n \downarrow 0$ a.s.
\item $(\tau_n)_n$ is non-increasing a.s.
\end{enumerate}
Suppose that $M_t^{(n)} := M_{t \vee {\tau_n}}$ is a $(\GG_{t,T})_{\tinT}$ backward martingale for each $n$. Then $M$ is called a backward local-martingale relative to $(\GG_{t,T})_{\tinT}$.
\end{definition}

\begin{definition}[Backward Brownian motion]
\label{defn:backwardbm}
Consider a process $Z$ taking values in $\reals^d$ which is adapted to a backward filtration $(\GG_{t,T})_{\tinT}$. In addition, let $Z$ satisfy the following:
\begin{enumerate}[label = (\roman*)]
\item $Z$ is continuous in $t$ a.s.
\item For $t > s$, the increment $Z_s - Z_t \sim \NN(0, (t-s)I)$ where $I$ is the $d \times d$ identity matrix. 
\item For $t >  s$, the increment $Z_s -Z_t$ is independent of $\GG_{t,T}$. 
\end{enumerate}
Then $Z$ is called a backward Brownian motion relative to $(\GG_{t,T})_{\tinT}$. Moreover, if $Z_T = 0$, then $Z$ is called a standard backward Brownian motion relative to $(\GG_{t,T})_{\tinT}$.
\end{definition}

\begin{remark}
It is clear that a backward Brownian motion is a backward martingale.
\end{remark}

\begin{remark}
It is clear that Levy's characterisation of Brownian motion extends to the backward scenario. Namely, a stochastic process is a backward Brownian motion if and only if it is a backward local-martingale with quadratic variation $t$.
\end{remark}
The following theorem is crucial in this article. It states how to construct an appropriate backward Brownian motion when the backward filtration of interest has undergone a certain type of filtration enlargement.
\begin{theorem}[{\citep[][Theorem 2.2]{pardoux1986grossissement}}]
\label{thm:definingmrB}
Enforce \Cref{ass:multi_mrB}. Recall from \Cref{sec:multivariablesetting} that $\bar \FF^{V, B}_{t,T} := \FF^B_{t,T} \vee \sigma(V_t)$ and
\begin{align*}
	\mrB_t^k = B_t^k - B_T^k - \int_t^T \frac{\sum_{l = 1}^D \partial_{y_l}(p(r, V_r) \beta_{l,k}(r, V_r))}{p(r, V_r)} \dd r, \quad k = 1, \dots , D,  
\end{align*}
where the integrand is taken to be zero if ever $p$ is zero. Then $\mrB$ is a $\reals^D$ valued backward Brownian motion in $(\bar \FF_{t,T}^{V,B})_{\tinT}$.
\end{theorem}

\section{Numerical results}
\label{appen:numericalresults}
\begin{table}[H]
\caption{Implied volatility, Monte-Carlo standard error, and Run time for pricing an ATM Put option with maturity 6 months. Price is obtained via the Monte-Carlo Mixing Solution method with 1,000,000 paths and 24 time steps per day (Benchmark).}
\label{table:IGaimpvolsbenchmark}
\begin{center}
\begin{tabular}{cc c cccc}
\toprule
\multicolumn{7}{c}{Benchmark} \\
\cmidrule{1-2}\cmidrule{4-7} 
 \#Steps/day 	 &\#Path		& & IV(\%)	 & S.E.(bp)  	& Abs Err(bp)	& Run(s) \\
\midrule
     24& $ 10 \times 10^5$ 	&& 18.872   	& 1.20   		& N/A	& 226.7  	 \\
\bottomrule
\end{tabular}
\end{center}
\end{table}
\begin{table}[H]
\caption{Implied volatilities, Monte-Carlo standard errors, Absolute errors, and Run times for pricing an ATM Put option with maturity 6 months via the mixed Monte-Carlo PDE method, where \# of paths and time steps per day are varied, and \# of space points is fixed at 250.}
\label{table:IGaimpvolsSP}
\begin{center}
\begin{tabular}{cc c cccc}
\toprule
\multicolumn{7}{c}{Mixed Monte-Carlo PDE} \\
\cmidrule{1-2}\cmidrule{4-7} 
 \#Steps/day 	 &\#Path		& & IV(\%)	 & S.E.(bp)  	& Abs Err(bp)	& Run(s) \\
\midrule
0.5   & $10 \times 10^3$   	&&	 	18.77   & 11.71  & 9.72 & 74.6 	 \\
   & $20 \times 10^3$   	&&	 	18.99   & 8.51  & 11.35 & 148.9 	 \\
   & $40 \times 10^3$   	&&	 	18.87   & 5.95  & 0.09 & 298.7 	 \\
      & $80 \times 10^3$   	&&	 	18.85   & 4.21  & 2.03 & 595.7 	 \\
        & $10 \times 10^5$   	&&	 	18.91   & 1.20  & 3.85 & 7404.3 	 \\
      \midrule
1   & $10 \times 10^3$   	&&	 	18.79   & 11.71  & 8.27 & 147.9 	 \\
   & $20 \times 10^3$   	&&	 	18.85   & 8.57  & 1.68 & 295.2 	 \\
    & $40 \times 10^3$   	&&	 	18.83   & 5.95  & 3.80 & 589.0 	 \\
   & $80 \times 10^3$   	&&	 	18.87   &  4.20 & 0.40 &  1177.0	 \\
      & $10 \times 10^5$   	&&	 	18.88   &  1.19 & 0.48 &  14712.6	 \\
\midrule
2   & $10 \times 10^3$   	&&	 	18.96   & 12.09  & 8.85 & 297.7 	 \\
   & $20 \times 10^3$   	&&	 	18.84   & 8.36   & 3.28 & 597.8 	 \\
    & $40 \times 10^3$   	&&	 	18.80   & 5.88  & 6.82 & 1184.0 	 \\
   & $80 \times 10^3$   	&&	 	18.87   &  4.23 & 0.02 &  2376.3	 \\
      & $10 \times 10^5$   	&&	 	18.89   &  1.19 & 1.52 &  29642.8	 \\
\bottomrule
\end{tabular}
\end{center}
\end{table}

\begin{table}[H]
\caption{Implied volatilities, Monte-Carlo standard errors, Absolute errors, and Run times for pricing an ATM Put option with maturity 6 months via the Full Monte-Carlo method, where the number of paths and time steps per day are varied.}
\label{table:IGaimpvolsFullMC}
\begin{center}
\begin{tabular}{cr c cccc}
\toprule
\multicolumn{7}{c}{Full Monte-Carlo} \\
\cmidrule{1-2}\cmidrule{4-7} 
 \#Steps/day 	 &\#Path		& & IV(\%)	 & S.E.(bp)  	& Abs Err(bp)	& Run(s) \\
\midrule
0.5   & $40 \times 10^3$   	&&	 	18.89   & 14.55  & 2.20 & 0.20 	 \\
   & $80 \times 10^3$   	&&	 	19.09   & 10.34  & 22.08 & 0.41 	 \\
   & $160 \times 10^3$   	&&	 	18.93   & 7.27 & 5.66 & 1.24	 \\
      & $320 \times 10^3$   	&&	 	18.97   & 5.14  & 9.97 & 2.59 	 \\
            & $100 \times 10^5$   	&&	 	19.00   & 0.92  & 12.51 & 77.50	 \\
      \midrule
      
1   & $40 \times 10^3$   	&&	 	18.93   & 14.51  & 5.83 & 0.41 	 \\
   & $80 \times 10^3$   	&&	 	18.90   & 10.26  & 3.25 & 0.82 	 \\
   & $160 \times 10^3$   	&&	 	18.84   & 7.26 & 3.04 & 2.41	 \\
      & $320 \times 10^3$   	&&	 	18.93   & 5.13  & 6.32 & 4.83 	 \\
            & $100 \times 10^5$   	&&	 	18.95   & 0.92  & 7.48 & 156.52 	 \\
      \midrule
   
2   & $40 \times 10^3$   	&&	 	18.67   & 14.41  & 20.04 & 0.82 	 \\
   & $80 \times 10^3$   	&&	 	18.86   & 10.25  & 1.29 & 1.65 	 \\
   & $160 \times 10^3$   	&&	 	18.93   & 7.26 & 5.56 & 4.86	 \\
      & $320 \times 10^3$   	&&	 	18.93   & 5.14  & 6.09 & 9.61 	 \\
            & $100 \times 10^5$   	&&	 	18.91   & 0.92  & 3.57 & 310.92 	 \\
      \midrule
   
4   & $40 \times 10^3$   	&&	 	18.85   & 14.39  & 2.33 & 1.62 	 \\
   & $80 \times 10^3$   	&&	 	18.92   & 10.22  & 4.91 & 3.45 	 \\
   & $160 \times 10^3$   	&&	 	18.77   & 7.22 & 9.73 & 9.74	 \\
      & $320 \times 10^3$   	&&	 	18.89   & 5.12  & 2.30 & 19.18 	 \\
            & $100 \times 10^5$   	&&	 	18.89   & 0.92  & 1.38 & 624.10 	 \\
      \midrule
   
8   & $40 \times 10^3$   	&&	 	18.81   & 14.48  & 6.55 & 3.22 	 \\
   & $80 \times 10^3$   	&&	 	18.89   & 10.23  & 1.83 & 6.58 	 \\
   & $160 \times 10^3$   	&&	 	18.74   & 7.20 & 13.56 & 19.36	 \\
      & $320 \times 10^3$   	&&	 	18.82   & 5.11  & 4.89 & 38.27 	 \\
            & $100 \times 10^5$   	&&	 	18.88   & 0.92  & 0.84 & 1242.32 	 \\
      \midrule
      
      16   & $40 \times 10^3$   	&&	 	18.76   & 14.42  & 10.81 & 6.47 	 \\
   & $80 \times 10^3$   	&&	 	18.85   & 10.22  & 2.51 & 13.01 	 \\
   & $160 \times 10^3$   	&&	 	18.99   & 7.27 & 12.14 & 38.70	 \\
      & $320 \times 10^3$   	&&	 	18.93   & 5.13  & 5.91 & 76.65 	 \\
            & $100 \times 10^5$   	&&	 	18.85   & 0.92  & 1.91 & 2477.73 	 \\
      \midrule
   
24   & $40 \times 10^3$   	&&	 	18.86   & 14.40  & 1.18 & 9.63 	 \\
   & $80 \times 10^3$   	&&	 	18.88   & 10.25  & 0.40 & 19.58 	 \\
   & $160 \times 10^3$   	&&	 	18.98   & 7.28 & 10.56  & 57.93	 \\
      & $320 \times 10^3$   	&&	 	18.85   & 5.11  & 2.13 & 115.06 	 \\
            & $100 \times 10^5$   	&&	 	18.86   & 0.92  & 0.74 & 3718.76 	 \\
\bottomrule
\end{tabular}
\end{center}
\end{table}

\end{document}